\documentclass[12pt]{amsart}
\usepackage{geometry} 
\geometry{a4paper} 
\usepackage{graphicx}
\usepackage{amsfonts}
\usepackage{amsthm}
\usepackage{amsmath}
\usepackage{amssymb}
\usepackage[arrow,matrix]{xy}
\newtheorem{theorem}{Theorem}
\newtheorem{corollary}{Corollary}

\newtheorem{proposition}{Proposition}
\theoremstyle{definition}
\newtheorem{example}{Example}

\newcommand{\co}{\colon\,}
\newcommand{\bT}{\mathbb T}
\newcommand{\bR}{\mathbb R}
\newcommand{\bC}{\mathbb C}
\newcommand{\bZ}{\mathbb Z}
\newcommand{\bP}{\mathbb P}
\newcommand{\ubZ}{\underline{\mathbb Z}}
\newcommand{\bN}{\mathbb N}
\newcommand{\bQ}{\mathbb Q}
\newcommand{\SO}{\mathop{\rm SO}}

\newcommand{\SL}{\mathop{\rm SL}}

\newcommand{\pt}{\text{pt}}
\newcommand{\lp}{\textup{(}}
\newcommand{\rp}{\textup{)}}
\newcommand{\Ext}{\operatorname{Ext}}
\newcommand{\Hom}{\operatorname{Hom}}
\newcommand{\Ann}{\operatorname{Ann}}
\newcommand{\Aut}{\operatorname{Aut}}
\newcommand{\coker}{\operatorname{coker}}
\newcommand{\im}{\operatorname{im}}


\title{K-theoretic matching of brane charges
in S- and U-duality}
\author{Stefan Mendez-Diez}
\address{Department of Mathematical and Statistical Sciences\\
University of Alberta\\
Edmonton, AB T6G 2G1, Canada}
\email[Stefan Mendez-Diez]{mendezdi@ualberta.ca}
\thanks{The Research of the first author was partially supported by NSF grants DMS-0504212 and
  DMS-0805003. Some of this work constituted part of the
  author's Ph.D.\ dissertation, submitted to the University of
  Maryland in May, 2010.}
  
\author{Jonathan Rosenberg}
\address{Department of Mathematics\\
University of Maryland\\
College Park, MD 20742-4015, USA} 
\email[Jonathan Rosenberg]{jmr@math.umd.edu}
\thanks{The research of the second author was partially supported by NSF grants DMS-0504212 and DMS-0805003.}
  
\keywords{D-branes, String Duality}
  
\begin{document}

\begin{abstract}
We discuss $K$-theoretic matching of D-brane charges in the
string duality between type I on $\bT^4$ and type IIA on $K3$. This
case is more complex 
than the familiar case of IIA/IIB duality, which is already well
understood, but it turns out that replacing $K3$ by
its orbifold blow-down seems largely to resolve the apparent problems
with the theory.
\end{abstract}

\maketitle

\section{Introduction}

It is believed that the five superstring theories are all related
through various dualities: T-duality, S-duality, and a combination of
both, known as U-duality \cite{MR1321523}.
Sometimes the explicit dualities between
the different superstring theories are unclear. 

We can determine a lot about possible dualities by looking at stable
D-brane configurations. If two theories are dual to one another then
they should have the same spectra. Therefore,  dual string theories 
should have equivalent brane configurations. (\emph{Equivalent}
here means, for instance, that it should be possible to match up the
brane charges in the two theories, and thus these charges should
live in isomorphic groups.) The stable D-brane
configurations in a given theory depend only on the topology of the
spacetime and can be classified by [possibly twisted]
$K$-theory \cite{Moore:2004,
  Witten:1998, Witten:2000, Olsen:1999}.  The $K$-theoretic
classification of $D$-brane charges has proven very useful in the
study of string dualities, particularly $T$-duality between the type
IIA and IIB theories. By putting together the known duality between
type I and the $\SO(32)$ heterotic string 
theories with the one between the $\SO(32)$ heterotic and type-IIA string
theories, one obtains an example of a U-duality between type-I and type-IIA
string theories. This paper will focus on matching the $K$-theoretic
classification of $D$-brane charges in this example of $U$-duality.

It is conjectured that type-IIA string theory on $K3$ is dual to the
$\SO(32)$ heterotic string on the $4$-torus, $\bT^4$
\cite{Hull:1995}, \cite[\S4]{MR1334520}, \cite{MR1349795},
\cite{MR1394147}, \cite{MR1731643}, \cite[p.\ 424]{Becker:2007}, \cite{Aharony:2003}, \cite{Kiritsis:2000}. The
$\SO(32)$ heterotic string is 
believed to be equivalent to type-I string theory 
via S-duality \cite{MR1381609}, \cite{Tseytlin:1996},
\cite{MR1352442}, so this gives 
a duality between type-I string theory on $\bT^4$ and type-IIA string
theory on $K3$.  (This chain of equivalences is mentioned explicitly
in \cite[p.\ 258]{MR2010972}.) Not only does this give a concrete
example of a duality between the type I and type IIA theories, but it
can be extended to relate all five of the superstring theories. At the
$\bT^4/\bZ_2$ point of the moduli space, the type IIA theory can be
related by a $T$-duality to the type IIB theory on $\bT^4/\bZ_2$ with
an $NS5$-brane at each of the $16$ singularities \cite{Aharony:2003,
  Kutasov:1995, MR1408165}. This is also related to the type I theory
on $\bT^4$ via $S$-duality \cite{MR1408165}. Furthermore the $\SO(32)$
and $E_8\times E_8$ heterotic theories compactified on a torus are
$T$-dual to one another \cite[p.\ 288]{Becker:2007}.

Stable D-brane charges  are classified by $KO(X)$ in type-I string
theory \cite{Witten:1998, Reis:2006}, and by $\widetilde{K}(X)$ (resp.,
$K^{-1}(X)$) in type-IIB theory (resp., type-IIA theory)
\cite{Witten:1998, Witten:2000, Becker:2007}. Since it is 
conjectured that type-I string theory on $\bT^4$ and type-IIA string
theory on $K3$ are dual to each other, they should have the
same possible charges. The puzzle is that $KO^*(\bT^4)$ contains
$2$-torsion (implying the existence of torsion charged $D$-branes in
the type I theory compactified on $\bT^4$) and $K^0(K3)\cong
\bZ^{24}$, while $K^{-1}(K3)\cong 0$, so that there wouldn't appear to
be any stable D-brane charges at all in type-IIA 
theory compactified on $K3$! Even in IIB theory on $K3$, it appears there is no
room for torsion brane charges!

Now one could object that $D$-branes in one theory do not
necessarily transform to $D$-branes in a dual theory. However, if
there is a $D$-brane in one theory, there must be an object 
(not necessarily a D-brane) in the dual
theory that corresponds to the $D$-brane with the correct charge. For
example, under the duality between the type IIA theory on $K3$ and
the $\SO(32)$ heterotic theory on $\bT^4$, the fundamental heterotic
string appears as a non-singular soliton solution in the type IIA
theory. Similarly, the fundamental type IIA string appears as a
non-singular soliton solution in the heterotic theory
\cite{MR1349795}. The spectrum of the elementary heterotic string
contains charged BPS states, but the spectrum of the elementary type
IIA string does not. All of the gauge fields in the type IIA theory
come from the Ramond-Ramond (RR) sector, so the only charged BPS
states are $D$-branes. The elementary type IIA string is not
charged. This is explained by the fact that the charged states in the
type IIA theory appear as solitons in the heterotic theory and not in
the elementary string spectrum. The fundamental type IIA string
appears as an uncharged soliton in the heterotic theory. Even though
$D$-branes do not necessarily transform to $D$-branes, it is possible
to match the charges in the dual theories. 
In our current example there is no way to explain 
the torsion charges that appear in the type I theory in the type IIA
theory other than as torsion charged $D$-branes. The $NS5$-brane that
wraps the $K3$ appearing in the type IIA theory corresponds to the
fundamental string in the heterotic theory. Since the fundamental
heterotic string is BPS, it cannot have torsion charge. Therefore, the
$NS5$-brane occurring in the type IIA theory cannot have torsion
charge and the torsion charged branes occurring in the type I theory
must map to torsion charged $D$-branes in the type IIA theory. 
There must be $D$-branes in the type IIA theory that carry
torsion charge, but there is no torsion in the cohomology or [untwisted]
$K$-theory of $K3$. Furthermore, twisting by an $H$-flux cannot introduce
torsion in the twisted $K$-theory of $K3$ since any twist would have
to live in $H^3(K3)=0$. We show that if one first removes 
the sixteen isolated singularity points of an orbifold blow-down of $K3$,
such an isomorphism is close to being achieved, albeit in a very
nontrivial way. Thus 
this calculation provides an interesting test of S- and U-duality. We
then look at the K-theory classification of $D$-brane charges at the
orbifold point, $\bT^4/\bZ_2$, of the moduli space of $K3$ for both
the type IIA and IIB theories and discuss some of the issues that
arise. 

The first author would like to thank Chuck Doran for many useful
conversations and suggestions.


\section{$KO^*(\bT^4)$}

We work throughout with $K$-theory with compact support. Thus for a
locally compact space $X$ which is not compact, $KO^*(X)$ is
(essentially by definition) identified with $\widetilde{KO}^*(X^+)$,
where $X^+=X \cup \{\infty\}$ is the one-point compactification of $X$
(e.g., $(\bR^n)^+ = S^n$). $KO^{-n}(\bT^4)$ can be computed from 
$$KO^{-i}(\pt)\cong  \left \{ \begin{array}[pos]{l}\bZ,\;\;\;i\equiv 0 \pmod{4} \\
\bZ_2,\;\;i\equiv 1,2 \pmod{8} \\
0,\;\;\;\text{otherwise},\\
\end{array} \right. $$
by iterating the formula $KO^k(X\times S^1)\cong KO^k(X) \oplus
KO^{k-1}(X)$, which follows from the axioms of a (generalized)
cohomology theory. Thus we obtain:
\[
\begin{aligned}
KO^{-i}(\bT^4)&\cong KO^{-i}(\bT^3)\oplus KO^{-(i+1)}(\bT^3)\\
&\cong KO^{-i}(\bT^2)\oplus 2KO^{-(i+1)}(\bT^2)\oplus
KO^{-(i+2)}(\bT^2)\\
&\cong KO^{-i}(\bT)\oplus 3KO^{-(i+1)}(\bT)\oplus 3KO^{-(i+2)}(\bT) \oplus
KO^{-(i+3)}(\bT)\\ 
&\cong KO^{-i}(\pt)\oplus 4KO^{-(i+1)}(\pt)\oplus
6KO^{-(i+2)}(\pt)\oplus 4KO^{-(i+3)}(\pt)\\
&\qquad \qquad \oplus KO^{-(i+4)}(\pt). 
\end{aligned}
\]
Since type-I string theory is a ten-dimensional theory, the actual
spacetime manifold for type-I string theory compactified on $\bT^4$ 
is $\bT^4\times\bR^6$. Stable D-brane charges in type-I string theory
on $\bT^4\times\bR^6$ are thus classified by
$$KO^0(\bT^4\times\bR^6)\cong KO^{-6}(\bT^4)\cong 6\bZ\oplus
5\bZ_2.$$ 

However, this may not be the end of the story. If 
$\iota\co Y^{p+1}\hookrightarrow \bT^4\times\bR^6$ is the 
inclusion of a (proper)
D$p$-brane in $\bT^4\times\bR^6$, with $Y$ assumed to be spin for anomaly
cancellation, the Gysin map in $KO$-theory gives a map
$\iota_!\co KO(Y) \to KO^{3-p}(\bT^4)$ obtained as the following
composite:
\begin{multline*}
KO(Y) \xrightarrow{\text{P.D.}} KO_{p+1}(Y) \xrightarrow{\iota_*}
KO_{p+1}(\bT^4\times\bR^6) \\ \xrightarrow{(\text{P.D.})^{-1}}
KO^{10-(p+1)}(\bT^4\times\bR^6) \cong KO^{3-p}(\bT^4).
\end{multline*}
Here $\hbox{P.D.}$ denotes the Poincar\'e duality isomorphism and
$(\text{P.D.})^{-1}$ is its inverse.
A Chan-Paton bundle with orthogonal gauge group gives a class in
$KO(Y)$, and thus via the Gysin map $\iota_!$ a D-brane charge in
$KO^{3-p}(\bT^4)$. This is $KO^{-6}(\bT^4)\cong 6\bZ\oplus
5\bZ_2$ when $p=9$ or $1$. (Recall that real $K$-theory satisfies Bott
periodicity with period $8$.) Similarly, a Chan-Paton bundle with
symplectic gauge group gives a class in $KSp(Y)\cong
KO^4(Y)$ (since real and symplectic $K$-theory agree after a
dimension shift by $4$), and thus via the Gysin map $\iota_!$ a
D-brane charge in $KO^{7-p}(\bT^4)$. This can again be identified with
$KO^{-6}(\bT^4)$ when $p=5$. The $9$-branes and $1$-branes with real
Chan-Paton bundles, along with the $5$-branes with symplectic Chan-Paton
bundles, account for all the usual BPS-branes of type-I superstring
theory \cite[p.\ 223]{Becker:2007}.  But as pointed out by many
authors, e.g., 
\cite{{MR1650234},{MR1660435},{MR1658287},{Witten:1998},{MR1806590},
{Bergman:1999ta},{MR1731758},{MR1915386}}, there can
be additional D-brane charges coming from non-supersymmetric, but
still stable, branes with other values of $p$. Such charges (for 
type-I superstring theory compactified on $\bT^4$) are summarized in the
following Table \ref{table:KOgroups}. The various kinds of branes are
hypothetical; not all of them actually occur. Also note that after
inverting $2$, $KO$ and $KSp$ are the same, so the nature of the
Chan-Paton gauge group only affects the $2$-torsion.

\begin{table}[bh]
\begin{tabular}{||c|c|c|l||}
\hline
$p$&bundle type&BPS?&Charge group\\
\hline
$9$&O&yes&$KO^{-6}(\bT^4)\cong 6\bZ\oplus 5\bZ_2$\\
$9$&Sp&no&$KO^{-2}(\bT^4)\cong 6\bZ\oplus \bZ_2$\\
$8$&O&no&$KO^{-5}(\bT^4)\cong 4\bZ\oplus \bZ_2$\\
$8$&Sp&no&$KO^{-1}(\bT^4)\cong 4\bZ\oplus 5\bZ_2$\\
$7$&O&no&$KO^{-4}(\bT^4)\cong 2\bZ$\\
$7$&Sp&no&$KO^{0}(\bT^4)\cong 2\bZ\oplus 10\bZ_2$\\
$6$&O&no&$KO^{-3}(\bT^4)\cong 4\bZ$\\
$6$&Sp&no&$KO^{1}(\bT^4)\cong 4\bZ\oplus 10\bZ_2$\\
$5$&O&no&$KO^{-2}(\bT^4)\cong 6\bZ\oplus \bZ_2$\\
$5$&Sp&yes&$KO^{2}(\bT^4)\cong 6\bZ\oplus 5\bZ_2$\\
$4$&O&no&$KO^{-1}(\bT^4)\cong 4\bZ\oplus 5\bZ_2$\\
$4$&Sp&no&$KO^{3}(\bT^4)\cong 4\bZ\oplus \bZ_2$\\
$3$&O&no&$KO^{0}(\bT^4)\cong 2\bZ\oplus 10\bZ_2$\\
$3$&Sp&no&$KO^{4}(\bT^4)\cong 2\bZ$\\
$2$&O&no&$KO^{1}(\bT^4)\cong 4\bZ\oplus 10\bZ_2$\\
$2$&Sp&no&$KO^{5}(\bT^4)\cong 4\bZ$\\
$1$&O&yes&$KO^{2}(\bT^4)\cong 6\bZ\oplus 5\bZ_2$\\
$1$&Sp&no&$KO^{6}(\bT^4)\cong 6\bZ\oplus \bZ_2$\\
$0$&O&no&$KO^{3}(\bT^4)\cong 4\bZ\oplus \bZ_2$\\
$0$&Sp&no&$KO^{7}(\bT^4)\cong 4\bZ\oplus 5\bZ_2$\\
$-1$&O&no&$KO^{4}(\bT^4)\cong 2\bZ$\\
$-1$&Sp&no&$KO^{8}(\bT^4)\cong 2\bZ\oplus 10\bZ_2$\\
\hline
\end{tabular}
\medskip
\caption{Groups of D$p$-brane charges for type I compactified on $\bT^4$}
\label{table:KOgroups}
\end{table}


\section{K-Theory of a Desingularized $K3$}
\label{sec:KK3}

It becomes immediately obvious that we don't want to just use $K3$ as a
possible dual topology to type-I theory on $\bT^4$, because the
complex $K$-theory of $K3$ contains no torsion, and $K^{-1}(K3)\cong 0$,
so could not possibly
which would imply there were no stable $D$-branes in the type IIA
theory compactified on $K3$ and we could not possibly explain the
torsion charges occurring in the type I theory. Instead of $K3$, a
$\bZ_2$ orbifold quotient of $\bT^4$, where $\bZ_2$ acts by multiplication by $-1$ (which is a singular limit of
$K3$), is often used in string theory because the Ricci-flat metric can
be explicitly determined \cite[\S9.3]{Becker:2007}. The orbifold,
$\bT^4/\bZ_2$, has $16$ isolated singular points, the $16$
fixed points of the $\bZ_2$ action on $\bT^4$. We can remove
these $16$ singular points by first removing $16$ open balls in
$\bT^4$ surrounding each of the singular points. We then divide out by
the $\bZ_2$ action on $\bT^4$ minus the $16$ open balls to
obtain a smooth manifold (with boundary), $N$. The boundary of $N$ is
$16$ copies of 
$\bR\bP^3$.  $16$ copies of an Eguchi-Hanson space (which
topologically is the unit disk bundle of the tangent bundle of $S^2=\bC\bP^1$)
are usually 
glued onto $N$ along their common boundary to create a manifold with
the same topology as $K3$. This cannot be the correct manifold for our
purposes because
again its cohomology (and thus its $K$-theory) contains no torsion.
 Since the original $K3$ in the type IIA theory has no singularities,
 we do not want to allow for any physical effect from the
 singularities (which like $D$-branes are sources for R-R
 charges). Therefore, physically, we are only interested in fields
 that approach a 
constant value at the singularities, and it makes more sense simply to
collapse the singularities and deal with the singular quotient space
$(\bT^4/\bZ_2) / (\text{singularities}) \cong N/\partial N $.
Since this space is the one-point compactification of the
\emph{interior} of $N$, we are interested in $\widetilde K^*(N/\partial N )
\cong K^*(N, \partial N )$, the relative $K$-theory of the
manifold $N$ rel its boundary. While we don't want to allow for any
effect from the singularities in the type IIA theory, the situation is
different when looking at the type IIB theory.  

The type IIA theory compactified on $\bT^4/\bZ_2$, the orientifold
point of the moduli space of $K3$, is $T$-dual to the type IIB theory
compactified on $\bT^4/\bZ_2$ with an $NS5$-brane and an orientifold
$5$-plane at each of the $16$ fixed points of the $\bZ_2$
action. Performing an $\SL(2,\bZ)$ transformation on this
configuration transforms the $NS5$-branes to $D5$-branes, giving us
the type IIB theory on $\bT^4/\bZ_2$ with a $D5$-brane and an
$O5$-plane at each of the singularities
\cite{Kutasov:1995,Aharony:2003,MR1408165}. The final type IIB
configuration is equivalent to the type I theory on $\bT^4$. This
gives us an explicit way of relating type I on $\bT^4$, type IIA on
$K3$ and type IIB on $\bT^4/\bZ_2$ with either an $NS5$-brane or a
$D5$-brane at each of the $O5$-planes. The $T$-duality transformation
between the type IIA theory and the type IIB theory transforms the
$64$ blow-up modes of of the type IIA theory at the orbifold point of
the moduli space ($4$ for each singularity) into moduli controlling
the positions of the $5$-branes. While we described this duality for
the case when the $5$-branes are located at the singularities, the
duality is true when the branes are moved \cite{Aharony:2003}. With
this in mind, we will move the $5$-branes a small distance from the
singularities and consider them located on the boundary of $N$. In
this case the effects of the boundary are important and we cannot mod
it out. Therefore in the type IIB theory we are interested in
$K^*(N)$. 

To compute $K^*(N,\partial N)$ and $K^*(N)$ we will first need to compute the
homology of $N$.  Let $M$ be $\bT^4\smallsetminus
(16\; \text{open}\; \text{balls})$, which is the double
cover of $N$. Since $N$ is obtained from $M$ by dividing out by a free
$\bZ_2$-action, there is a spectral sequence
$H_p(\bZ_2,H_q(M))\Rightarrow H_{p+q}(N)$. (See for example \cite[Theorem
8$^\text{bis}$.9]{McCleary}.) So we must first 
determine the homology of $M$ as a $\bZ_2$-module. 
The homology of $M$ is torsion-free as a $\bZ$-module,
but we need to compute it as a $\bZ G$-module, where $G= \bZ_2$. Since
$G$ is self-dual, the
ring $\bZ G$ is isomorphic to the representation ring $R$ of $G$
studied in the Appendix (Section \ref{sec:appendix}), so we refer to
results there for more details about homological algebra over this ring.
In particular, it turns out 
(as a consequence of iterated application of Propositions 
\ref{prop:RstrZ}, \ref{prop:RstrZZ}, \ref{prop:extRJbyR}, and
\ref{prop:extRIbyRJ}) that the homology of $M$ is a direct sum of
copies of
three standard $R$-modules (all torsion-free as $\bZ$-modules):
$R$ as a module over itself,
the trivial module $\bZ$ (or $R/I$, in the notation of the Appendix), and
$\bZ$ with the non-trivial $\bZ_2$-action (where the generator of the
group acts by multiplication by $-1$), which we call $\ubZ$ to
distinguish it from $\bZ$ with the trivial $\bZ_2$-action.  (This last
$R$-module, in the notation of the Appendix, is $R/J$.)

First of all, note that the cohomology ring of 
$\bT^4$ is an exterior algebra on $4$ generators. Each of these
generators is sent to its negative under the $\bZ_2$ action, so $\bZ_2$ acts
trivially on the even exterior powers and non-trivially on the odd
exterior powers.  So $H^1(\bT^4)\cong H_1(\bT^4)\cong H^3(\bT^4)\cong
H_3(\bT^4) \cong \ubZ^4$, while $H^2(\bT^4)\cong H_2(\bT^4)\cong \bZ^6$.
Now by a simple transversality argument,
removing $16$ balls from $\bT^4$ does
not change the fundamental group, so $\pi_1(M)\cong\pi_1(\bT^4)\cong\ubZ^4$.
Therefore $H_1(M)\cong \ubZ^4$. To obtain $H_2(M)$ we
can use the Meyer-Vietoris sequence:  
\[
\xymatrix{
H_2(M\cap 16B^4) \ar[r] &
  H_2(M)\oplus H_2(16B^4) \ar[r] & H_2(\bT^4) \ar[r]&
  H_1(M\cap 16B^4)\\ 
0 \ar[r] \ar@{=}[u]& H_2(M)\ar[r] \ar@{=}[u]& \bZ^6 \ar[r] \ar@{=}[u]& 0
\ar@{=}[u]}
\]
Here $B^4$ is the closed $4$-ball, so $M\cap16B^4=16S^3$. So we see
$H_2(M)\cong \bZ^6$. We can determine $H_3(M)$ from the long exact
sequence of pairs, using the pair $(M, \partial M)$, where $\partial M
=16 S^3$. The part of the long exact sequence we are interested in is  
$$H_4(M)\rightarrow H_4(M,\partial M)\rightarrow H_3(16
S^3)\rightarrow H_3(M)\rightarrow H_3(M, \partial M)\rightarrow H_2(16
S^3).$$ 
$H_4(M)\cong 0$ since $M$ has a nonempty boundary. By Poincar\'e duality
$H_4(M, \partial M)\cong H^0(M)\cong H_0(M)\cong \bZ$. And similarly, $H_3(M,
\partial M)\cong H^1(M)\cong FH_1(M)\oplus TH_0(M)\cong \ubZ^4$. Finally,
$H_3(16S^3)\cong \bZ^{16}$ (the $\bZ_2$ action is trivial since it
preserves orientation on $S^3$) and $H_2(16S^3)\cong 0$. Putting this all
together, the long exact sequence becomes 
$$0\rightarrow\bZ\rightarrow\bZ^{16}\rightarrow
H_3(M)\rightarrow\ubZ^4\rightarrow 0,$$
where it's easy to see that the map $\bZ\rightarrow\bZ^{16}$ is just
the diagonal inclusion. 
This shows us that $H_3(M)$ is an extension of $\ubZ^4$ by
$\bZ^{15}$. To summarize, we have
\[
H_i(M)\cong \left\{\begin{array}[pos]{ll} \bZ, & i=0\\
\ubZ^4, & i=1\\
\bZ^6, & i=2\\
\text{extension of }\ubZ^4\text{ by }
\bZ^{15}, & i=3\\
0, & \text{otherwise}.\\
\end{array} \right.
\]

However, it will turn out that the extension in $H_3(M)$ cannot be
split, and rather is $R^4\oplus \bZ^{11}$. To see this, observe that
the spectral sequence $E^2_{p,q}=H_p(\bZ_2$, $H_q(M)) \Rightarrow
H_{p+q}(N)$ has only $4$ rows, and thus $E^5=E^\infty$. 
Recall that
$H_p(\bZ_2,\bZ)\cong \bZ$ for $p=0$, $\bZ_2$ for $p\ge 1$ odd, and
$0$ for $p\ge 2$ even, while $H_p(\bZ_2,\ubZ)\cong \bZ_2$ for $p\ge 0$
even, $0$ for $p$ odd, so the first 3 rows of $E^2$ are:

\begin{table}[h]
\begin{center}
\begin{tabular}{l|lllllll}
$2$ & $\bZ^6$ & $\bZ_2^6$ & 0 & $\bZ_2^6$ & 0 & $\bZ_2^6$ & $\cdots$ \\
$1$ & $\bZ_2^4$ & 0 & $\bZ_2^4$ & 0 & $\bZ_2^4$ & 0 & $\cdots$ \\
$0$ & $\bZ$ & $\bZ_2$ & 0 & $\bZ_2$ & 0 & $\bZ_2$ &$\cdots$ \\ \hline
$q\,/\,p$ & $0$ & $1$ & $2$ & $3$ & $4$ & $5$ & $\cdots$
\label{eq:specseq}
\end{tabular}
\end{center}
\caption{$E^2$ of the spectral sequence for computing $H_*(N)$}
\label{table:specseq}
\end{table}

Since $N$ is a
noncompact $4$-manifold, its homology must vanish in dimension $4$ and
higher, so $E^2_{p,3}$ for $p\ge 1$ must be killed off by entries in
the rows with $q\le 2$ and total degree $p+4\ge 5$. If $H_3(M)$ were
to contain a summand isomorphic to 
$\ubZ$, then $E^2_{p,3}$ would contain $2$-torsion for all even $p$,
which contradicts the fact that $E^2_{p,q}=0$ for $q\le 2$, $p>0$, and
$p+q$ even.  Thus, by iterated application of Propositions
\ref{prop:RstrZ}, \ref{prop:RstrZZ}, \ref{prop:extRJbyR}, and
\ref{prop:extRIbyRJ}, $H_3(M)\cong \bZ^{11}\oplus (\bZ G)^4$. It
follows that the $q=3$ row of $E^2$ in the spectral sequence is
$\bZ^{15}$ for $p=0$, $\bZ^{11}_2$ for $p\ge 1$ odd, and $0$ otherwise.

We can now determine $H_*(N)$ from the spectral sequence 
with $E^2_{p,q}=H_p(\bZ_2$, $H_q(M))$, Table \ref{table:specseq}. 
For $p>0$, $E_{p,q}^2$ is all torsion, so the free part of $H_q(N)$ is
the same as for $E_{0,q}^2$. So we see that the Betti numbers of $N$ are 
$$\beta_i(N)=\left\{\begin{array}[pos]{ll} 1, & i=0\\
0, & i=1\\
6, & i=2\\
15, & i=3\\
0, & \text{otherwise}.\\
\end{array} \right.$$
First we know that $H_0(N)\cong \bZ$ since $N$ is connected. We also know
$H_4(N)\cong 0$ because $N$ has a nonempty boundary. Now $H_3(N)\cong H^1(N,
\partial N)\cong FH_1(N, \partial N)\oplus TH_0(N, \partial N)$, and
$TH_0(N, \partial N)\cong 0$. So $H_3(N)$ is free, and thus isomorphic to
$\bZ^{15}$. We can use Mayer-Vietoris with $N$ and 16 Eguchi-Hanson
spaces, $E$,  
since $N\cup_{16\bR\bP^3}16E \cong K3$, to show that $H_2(N)\cong
\bZ^6$, from the exact sequence
\[
\dots\to H_{k+1}(K3)\to H_k(16\bR\bP^3) \to
  H_k(N)\oplus H_k(16E) \to H_k(K3) \to \dots.
\]
Furthermore, $E$ has the same homotopy type as $S^2$, since it
is the unit disk bundle of the tangent bundle of $S^2$. The part of
the Mayer-Vietoris sequence we are interested in is: 
\[
\xymatrix{
H_2(16\bR\bP^3) \ar[r] \ar@{=}[d] &
  H_2(N)\oplus H_2(16E) \ar[r] \ar@{=}[d] & H_2(K3) \ar@{=}[d] \\ 
0 \ar[r] & H_2(N)\oplus\bZ^{16} \ar[r] & \bZ^{22}.\\
}
\]
From this we see that $H_2(N)$ injects into a free abelian group and thus must
be free. We have shown that only $H_1(N)$ can have any torsion.  

We can calculate $H_1(N)$ from the spectral sequence
$H_*(\bZ_2,H_*(M))\Rightarrow H_*(N)$.  
$$E_{1,0}^2=H_1(\bZ_2, H_0(M))\cong \bZ_2.$$
No non-zero differential hits it or leaves it
because we have a first quadrant spectral sequence.
$$E_{0,1}^2=H_0(\bZ_2, H_1(M))\cong {\bZ_2}^4.$$
Again no differential hits it since $E_{2,0}^2=H_2(\bZ_2,
\bZ)\cong 0$. Therefore $H_1(N)$ is an extension of  $\bZ_2$ by
${\bZ_2}^4$. Also $H_1(N)$ is a quotient of ${\bZ_2 }^{16}$ as can be
seen from Mayer-Vietoris:
\[
\xymatrix{
H_1(16\bR\bP^3) \ar[r] \ar@{=}[d] &
  H_1(N)\oplus H_1(16E) \ar[r] \ar@{=}[d] & H_1(K3) \ar@{=}[d] \\
{\bZ_2}^{16}  \ar[r] & H_1(N)  \ar[r] & 0,\\
}
\]
so all of its torsion is of order 2. Therefore the extension is
trivial and $H_1(N)\cong {\bZ_2}^5$. Putting this all together, we see that
$$
H_i(N)\cong \left\{\begin{array}[pos]{ll} \bZ, & i=0\\ 
{\bZ_2}^5, & i=1\\
\bZ^6, & i=2\\
\bZ^{15}, & i=3\\
0, & \text{otherwise}.\\
\end{array} \right.$$
By Poincar\'e duality for manifolds with boundary (also known as
Alexander-Lef\-schetz duality), the cohomology of $N$ relative to its
boundary is thus
$$
H^i(N,\partial N)\cong H_{4-i}(N) \cong 
\left\{\begin{array}[pos]{ll} 0, & i=0\\ 
\bZ^{15}, & i=1\\
\bZ^6, & i=2\\
{\bZ_2}^5, & i=3\\
\bZ, & i=4\\ 
0, & \text{otherwise}.\\
\end{array} \right.$$

The $K$-theory is then computed from the Atiyah-Hirzebruch spectral
sequence 
\[
H^p(N, K^q(\pt)) \Rightarrow 
K^{p+q}(N), 
\] 
but all differentials vanish since the first differential is the Steenrod
operation $\operatorname{Sq}^3$, which must vanish, and there is no
room in this case for any higher differentials. Since the spectral sequence
collapses at $E_2$ and $\widetilde H^2(N)\cong \bZ^6 \oplus
{\bZ_2}^5$, while $H^1(N)=0$ and $H^3(N)\cong \bZ^{15}$,
\begin{equation}
\label{eq:KN}
\begin{cases}
K^0(N) \cong\bZ^7\oplus\bZ_2^5,\\
K^{-1}(N) \cong\bZ^{15}.
\end{cases}
\end{equation}
The Universal Coefficient Theorem in $K$-theory (see \cite{MR948692}
for references) gives a short exact sequence
\begin{equation}
\label{eqn:UCTZ}
0\to\Ext^1_{\bZ}(K^{k+1}(N),\bZ)\to K_k(N)\to\Hom(K^k(N),\bZ)\to 0,  
\end{equation}
that splits (non-canonically), so $K_0(N)\cong \bZ^7$ and 
\[
K_1(N)\cong \bZ^{15}\oplus \Ext^1_{\bZ}(\bZ^7\oplus\bZ_2^5,\bZ) 
\cong \bZ^{15}\oplus \bZ_2^5.
\]
Finally, since $N$ is an even-dimensional compact spin$^c$ manifold
(with boundary), we have Poincar\'e duality
$$K_*(N)\cong K^*(N,\partial N).$$
So $K^0(N, \partial N) \cong K_0(N)  \cong \bZ^7$ and 
$K^{-1}(N, \partial N)  \cong  K_1(N) \cong \bZ^{15} \oplus{\bZ_2}^5$. 

The stable $D$-brane charges in the type IIB theory on $N\times\bR^6$ with $16$ $5$-branes on its boundary are classified by $K^0(N\times\bR^6)\cong K^0(N)\cong \bZ^7\oplus\bZ_2^5$. In the type IIA theory, the stable D-brane charges on
$\mathring N\times\bR^6$ are classified by 
$K^{-1}\bigl((N/ \partial N)\times\bR^6\bigr)\cong K^{-1}(N/
\partial N) \cong K^{-1}(N,\partial N) \oplus K^{-1}(\pt) \cong
\bZ^{15} \oplus{\bZ_2}^5$ (since $K^{-1}(\pt) \cong 0$).


\section{$D$-Brane Charges in $\bZ_2$-Equivariant $K$-Theory}
\label{sec:equivK}

Since $N$ was obtained from the orbifold $\bT^4/\bZ_2$, we can obtain
more information about the classification of D-branes on
$(N,\partial N)$ by first looking at how D-branes are classified on
$\bT^4/\bZ_2$. 

As described in \cite{Witten:1998},  \cite{Johnson:1997},
\cite{GarciaCompean:1999} and \cite{MR1777343}, stable D-brane
configurations on an orbifold $X/G$ are classified by the
$G$-equivariant $K$-theory $K_G(X)$ in the type IIB theory, and
$K^{-1}_G(X)$ in the type IIA theory. Therefore, to classify stable
D-brane configurations in the type II theories on the orbifold limit
of $K3$, $\bT^4/\bZ_2$, we must compute $K_{\bZ_2}^*(\bT^4)$. 

For the remainder of this paper let $G=\bZ_2$ and
$R=R(G)=\bZ[t]/(t^2-1)$ be the representation ring of $G$, where $t$ is the
nontrivial character of $G$. Let $I=(t-1)$ and $J=(t+1)$. These are 
prime ideals with $R/I\cong R/J\cong\bZ$, and $R_{(I)}\cong
R_{(J)}\cong \bQ$ (see Appendix).  

$K_G(X)$ is an $R$-module, and the $R$-module structure carries more
information than just the abelian group structure. For the sake of
generality we compute $K^*_G(\bT^n)$ as an $R$-module when $G$ acts on
$\bT^n=\bR^n/\bZ^n$ by $-1$ and we place no restriction on $n$. 

\begin{theorem}
\label{thm:equivKTn}
Let $G$ act on $\bT^n=\bR^n/\bZ^n$ via multiplication by $-1$ on
$\bR^n$. Then $K_G^*(\bT^n)$ is entirely concentrated in even degrees,
and $K_G^0(\bT^n)\cong 2^{n-1}\cdot R \oplus 2^{n-1}\cdot (R/J)$.
\end{theorem}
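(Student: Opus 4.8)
The plan is to induct on $n$, peeling off one circle at a time. Write $\bT^n=\bT^{n-1}\times S^1$ with $G$ acting by $-1$ on each factor. The two points $\{0,\tfrac12\}\subset S^1$ fixed by $-1$ give a closed $G$-invariant subspace $F'=\bT^{n-1}\times\{0,\tfrac12\}$, namely two copies of $\bT^{n-1}$ carrying the $-1$-action, so $K_G^*(F')\cong 2\cdot K_G^*(\bT^{n-1})$. Its complement $U'=\bT^{n-1}\times(S^1\smallsetminus\{0,\tfrac12\})$ is the product of $\bT^{n-1}$ with the two open arcs, which $G$ swaps; since the arcs are merely interchanged, $U'$ is a \emph{trivial} double cover of $\bT^{n-1}\times(\text{arc})$, hence a free $G$-space whose orbit space is $\bT^{n-1}\times\bR$. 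Therefore $K_G^*(U')\cong K^{*-1}(\bT^{n-1})$ (ordinary $K$-theory, with a suspension shift from the $\bR$-factor), and because the covering is trivial the $R$-action is through the augmentation $R\to R/I\cong\bZ$. For the base case $n=1$ the same decomposition of $S^1$ gives the exact sequence $0\to K_G^0(S^1)\to R\oplus R\xrightarrow{\delta}K_G^1(U')\cong\bZ\to K_G^1(S^1)\to0$, where $\delta$ is the difference of the two augmentations; the substitution $(r_1,r_2)\mapsto(r_1-r_2,r_2)$ identifies $\ker\delta\cong I\oplus R\cong(R/J)\oplus R$ (using $I\cong R/J$), so $K_G^0(S^1)\cong R\oplus(R/J)$ and $K_G^1(S^1)=0$.

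For the inductive step I feed $F'$ and $U'$ into the six-term exact sequence of the pair $(\bT^n,F')$. By the inductive hypothesis $K_G^*(\bT^{n-1})$ is even, so $K_G^1(F')=0$ and the sequence collapses to
\[
0\to K_G^0(U')\to K_G^0(\bT^n)\to \ker\delta_0\to 0,\qquad K_G^1(\bT^n)\cong\coker\delta_0,
\]
where $\delta_0\co K_G^0(F')\to K_G^1(U')$ is the connecting map. Two algebraic facts do most of the work. First, $\delta_0$ is $R$-linear into $(R/I)^{2^{n-2}}$, and $\Hom_R(R/J,R/I)=0$ (a map would force $2x=0$ in the torsion-free $R/I$), so $\delta_0$ annihilates every $R/J$-summand of $K_G^0(F')\cong 2^{n-1}R\oplus 2^{n-1}(R/J)$. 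Second, on the $R$-summands $\delta_0$ is computed by the forgetful map $K_G^0(\bT^{n-1})\to K^0(\bT^{n-1})$ (again the augmentation), taken as the difference of the two restrictions to the two copies of $\bT^{n-1}$.

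I then argue that $\delta_0$ is surjective, which forces $K_G^1(\bT^n)=\coker\delta_0=0$ and propagates the even concentration. As a guide and consistency check I pin down the rational structure independently: the Atiyah--Segal character formula gives $K_G^*(\bT^n)\otimes\bC\cong(K^*(\bT^n)\otimes\bC)^G\oplus K^*((\bT^n)^G)\otimes\bC$, and since $-1$ acts on $H^k(\bT^n)$ by $(-1)^k$ the invariants survive only in even degree, yielding rank $2^{n-1}+2^n=3\cdot2^{n-1}$ in degree $0$ and $0$ in degree $1$. Localizing at the primes $I$ and $J$ (where $R_{(I)}\cong R_{(J)}\cong\bQ$) and applying the localization theorem to the $2^n$ fixed points then determines the multiplicities of the standard summands. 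It remains to evaluate $\ker\delta_0$ and the displayed extension as $R$-modules: here $\ker\delta_0\cong 2^{n-1}(R/J)\oplus\ker\phi$, where $\phi$ is the map on the $R$-summands, and $\ker\phi$ is itself an extension of a free $\bZ$-module (trivial $G$-action) by $I^{2^{n-1}}\cong(R/J)^{2^{n-1}}$.

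\emph{The main obstacle} is not the ranks but the $R$-module structure: on underlying abelian groups $R$ and $(R/I)\oplus(R/J)$ are indistinguishable, so I must show the relevant extensions are \emph{non-split}, assembling the pieces into free summands $R$ rather than $R/I\oplus R/J$. The engine is that $R$ is simultaneously the non-split extension of $R/I$ by $R/J$ and of $R/J$ by $R/I$ (since $I\cong R/J$ and $J\cong R/I$). I expect to resolve $\ker\phi$, and then the extension for $K_G^0(\bT^n)$, by iterated application of Propositions \ref{prop:RstrZ}, \ref{prop:RstrZZ}, \ref{prop:extRJbyR}, and \ref{prop:extRIbyRJ}, exactly as in the determination of $H_3(M)\cong\bZ^{11}\oplus(\bZ G)^4$ earlier: any surviving $R/I$- or $R/J$-summand where the localization count demands an $R$ would contradict the bookkeeping, forcing the non-split extension and the final answer $K_G^0(\bT^n)\cong 2^{n-1}R\oplus 2^{n-1}(R/J)$.
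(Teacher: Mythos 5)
Your decomposition is genuinely different from the paper's: you slice off one circle factor, $\bT^n=\bT^{n-1}\times S^1$, and use the fixed set $F'=\bT^{n-1}\times\{0,\tfrac12\}$ versus its free complement $U'$, whereas the paper filters the fundamental domain by the subsets $Y_k$ where at least $n-k$ coordinates equal $\pm\tfrac12$, so that every stratum $Y_m\smallsetminus Y_{m-1}$ is a disjoint union of copies of $\bR^m$ with the \emph{linear} $-1$-action. That choice is what makes the paper's induction painless: $K_G^*(\bR^m)$ is known to be $R$ or $R/J$ concentrated in degree $0$, so $K_G^1$ vanishes on every stratum, every boundary map is forced to be zero, and every resulting short exact sequence splits by Propositions \ref{prop:extRJbyR} and \ref{prop:RstrZZ} because the module $R/I$ never enters. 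Your route, by contrast, introduces $K_G^*(U')\cong K^{*-1}(\bT^{n-1})$ with trivial $R$-action, and this is exactly where the two real gaps appear.

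First, the vanishing of $K_G^1(\bT^n)$ rests entirely on the integral surjectivity of $\delta_0\co K_G^0(F')\to K_G^1(U')\cong(R/I)^{2^{n-2}}$, and you never prove it: you announce ``I then argue that $\delta_0$ is surjective'' and offer only the character-formula rank count as support. That count shows $\coker\delta_0$ is torsion, but a torsion cokernel is precisely the scenario in which $K_G^1(\bT^n)\neq 0$, so nothing is gained; you would need to exhibit explicit equivariant classes on the two copies of $\bT^{n-1}$ whose differences of restrictions generate $K^0(\bT^{n-1})$ integrally (and justify the asserted identification of $\delta_0$ with such a difference of forgetful maps, which is also only stated). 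Second, the final step is circular: you claim the localization counts at $I$ and $J$ ``determine the multiplicities,'' but they do not. If $a$, $b$, $c$ denote the numbers of summands $R$, $R/I$, $R/J$ in the ($\bZ$-torsion-free) module $K_G^0(\bT^n)$, the available constraints are $a+b=2^{n-1}$ (rank at $I$), $a+c=2^{n}$ (Segal localization at the $2^n$ fixed points), and $2a+b+c=3\cdot 2^{n-1}$ (total rank); the third is the sum of the first two, so every $0\le a\le 2^{n-1}$ is numerically admissible and the intended answer $a=2^{n-1}$, $b=0$ is not singled out. In the $H_3(M)$ computation you cite as a model, the extra leverage came from the vanishing of $H_{\ge 5}(N)$ in the spectral sequence --- a constraint with no analogue here --- so the non-splitness of your extensions of $\ker\delta_0$ by $(R/I)^{2^{n-2}}$ and of the free quotient of $\ker\phi$ by $I^{2^{n-1}}$ must be established by an actual computation of the extension classes, which the proposal does not supply. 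Until both points are filled in, the argument establishes the ranks but not the theorem.
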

\begin{proof}
Throughout this proof we will be using the result from \cite{MR0234452}:
If $C$ is a closed $G$-invariant subspace of a locally compact
$G$-space $X$ then the sequence 
\begin{equation}
\label{eq:clgsub}
\begin{aligned}
K_G^0(X-C) &\to K_G^0(X) \to K_G^0(C) \to K_G^1(X-C) \to K_G^1(X) \\ &\to
K_G^1(C) \to K_G^0(X-C) \to \cdots 
\end{aligned}
\end{equation}
is exact. Here the last part was gotten using Bott periodicity,
$K_G^2(X-C)\cong K_G^0(X-C)$. We will also use the result from
\cite{0903.1035}: 
\begin{equation}
\label{eq:kgrn}
\begin{array}{rcl}
K_G^0(\pt)=K_G^0(\bR^k) & = & R,\text{ if $k$ is even}\\
K_G^0(\bR^k) & = & R/J,\text{ if $k$ is odd}\\
K_G^1(\pt)=K_G^1(\bR^k) & = & 0.\\
\end{array}
\end{equation}
A fundamental domain for $\bT^n$ is $F=\{
(x_1,\ldots , x_n) : |x_j|\leq\frac{1}{2}\}/\sim$, where
$-\frac{1}{2}\sim\frac{1}{2}$. Define: 
\begin{multline}
Y_k=\bigcup_{i_{n-k}=i_{n-k-1}+1}^n\cdots
\bigcup_{i_2=i_{1}+1}^{n-k+2}\bigcup_{i_{1}=1}^{n-k+1}\Bigl\{(x_1,\ldots  
, x_n) : \\
x_{i_l}=\pm\frac{1}{2}\text{ for } 1\leq l\leq(n-k) \text{
  and } |x_j|\leq\frac{1}{2} \text{ if } j\neq i_l\Bigr\}/\!\sim.
  \end{multline}
So $Y_k$ is the set of all $n$-tuples where at least $n-k$ coordinates
are exactly $\pm\frac{1}{2}$. Note that $Y_k$ is the union of
${n\choose k}$ copies 
of $\bT^k$, whose pairwise intersections are all $\bT^{k-1}$. The union of
all the pairwise intersections is $Y_{k-1}$. Now by induction on $k$
we will show that $K_G^0(Y_k)$ is given by  
$${n\choose k}R\oplus{n\choose {k-1}}R/J\oplus{n\choose
  {k-2}}R\oplus\cdots\oplus{n\choose 1}R/J\oplus R,$$ 
if $k$ is even, and is given by
$${n\choose k}R/J\oplus{n\choose {k-1}}R\oplus{n\choose
  {k-2}}R/J\oplus\cdots\oplus{n\choose 1}R/J\oplus R,$$ 
if $k$ is odd. We will also show that $K_G^1(Y_k)=0$ in both cases.

Note that $Y_0=\pt$, so $K_G^*(Y_0)=R$, all in degree $0$. Let us now
look at the case of $k=1$. $Y_1$ is the one-point union 
of ${n\choose 1}=n$ $1$-tori. The point of intersection,
$y=(\pm\frac{1}{2}, \pm\frac{1}{2},\ldots , \pm\frac{1}{2})$, is a closed
$G$-invariant subset of $Y_1$, so by \eqref{eq:clgsub} we get an exact
sequence  
$$K_G^0(Y_1\backslash\{y\})\to K_G^0(Y_1)\to K_G^0(\pt)\to
K_G^1(Y_1\backslash\{y\})\to K_G^1(Y_1)\to K_G^1(\pt).$$ 
$Y_1\backslash\{y\}$ is the disjoint union of $n$ copies of
$\bR$. Using this and the above exact sequence we can see immediately
that $K_G^1(Y_1)=0$ since both $K_G^1(\bR)$ and $K_G^1(\pt)$ are
$0$ by \eqref{eq:kgrn}. And we get a short exact sequence 
\[
\xymatrix{
0\ar[r] & n(R/J)\ar[r] & K_G^0(Y_1)\ar[r] & R\ar[r] & 0,\\
}
\]
which splits since $R$ is free. So $K_G^0(Y_1)=n(R/J)\oplus R.$

Now let us look at the inductive step. Assume the above formula for
$K_G^*(Y_k)$ holds for all $k<m$. Let us also assume $m$ is
even. $Y_{m-1}$ is a closed $G$-invariant subset of
$Y_m$. $Y_m\backslash Y_{m-1}\cong {n\choose m}\bR^m$, since it is the set of all $n$-tuples with $n-m$ components exactly $\pm\frac{1}{2}$ and $m$ components with absolute value strictly less than $\frac{1}{2}$. Therefore
$K_G^*(Y_m\backslash Y_{m-1})\cong {n\choose m}R$ all in degree $0$, by \eqref{eq:kgrn} since $m$ is even. Also, by the
inductive assumption and since $m-1$ is odd,
$K_G^*(Y_{m-1})\cong{n\choose {m-1}}R/J\oplus{n\choose
  {m-2}}R\oplus{n\choose {m-3}}R/J\oplus\cdots\oplus{n\choose
  1}R/J\oplus R$ all in degree zero. By \eqref{eq:clgsub} we see that $K_G^1(Y_m)=0$ and we get a short exact sequence:
  $$0\to K_G^0(Y_m\backslash Y_{m-1})\to K_G^0(Y_m)\to K_G^0(Y_{m-1})\to 0.$$
By Proposition \ref{prop:extRJbyR} of the Appendix, the exact sequence
splits and we see that 
\begin{multline*}
K_G^*(Y_m)\cong K_G^*(Y_m\backslash Y_{m-1})\oplus
K_G^*(Y_{m-1})\\\cong {n\choose m}R\oplus{n\choose
  {m-1}}R/J\oplus{n\choose {m-2}}R\oplus\cdots\oplus{n\choose
  1}R/J\oplus R,
\end{multline*}
all in degree $0$. The inductive step for $m$ odd follows the same
form. 
Note that $Y_n$ is the entire space $\bT^n$, so by the above inductive proof we have shown:
\begin{equation}
\label{eq:kgtn}
KO_G^0(\bT^n)\cong  \left \{ \begin{array}[pos]{l} R\oplus{n\choose {n-1}}R/J\oplus{n\choose
  {n-2}}R\oplus\cdots\oplus{n\choose 1}R/J\oplus R, \text{ if $n$ is even} \\
R/J\oplus{n\choose {n-1}}R\oplus{n\choose
  {n-2}}R/J\oplus\cdots\oplus{n\choose 1}R/J\oplus R, \text{ if $n$ is odd.} \\
\end{array} \right. 
\end{equation}
$$\cong\sum_{j\leq n\text{ even}}{n\choose j}R\oplus\sum_{j\leq n,\text{ odd}}^n{n\choose j}R/J$$
Note that $0=(1-1)^n=\sum_{j\leq n\text{ even}}{n\choose j}-\sum_{j\leq n\text{ odd}}{n\choose j},$ which implies 
$$\begin{array}{rl}2^n & =(1+1)^n=\sum_{j=0}^n{n\choose j}\\
 & =\sum_{j\leq n\text{ even}}{n\choose j}+\sum_{j\leq n\text{ odd}}{n\choose j}\\
 & =2\sum_{j\leq n\text{ even}}{n\choose j}.\\
\end{array}$$
Therefore we see that $\sum_{j\leq n\text{ even}}{n\choose j}=\sum_{j\leq n\text{ odd}}{n\choose j}=2^{n-1}$. Putting this into \eqref{eq:kgtn} gives us our final result:
$$K_G^*(\bT^n)\cong 2^{n-1}\cdot R \oplus 2^{n-1}\cdot (R/J),$$
all in degree zero.
\end{proof}

When $n$ is even, Theorem \ref{thm:equivKTn} classifies stable 
D-branes on the orbifold $\bT^n/\bZ_2$ in the two type II theories. 
When $n$ is odd, the action of
$\bZ_2$ on $\bT^n$ reverses orientation, so we cannot define an
oriented string theory on $\bT^n/\bZ_2$. In order to get a consistent
string theory we would also have to mod out by the action of the
worldsheet parity operator to obtain unoriented strings. Stable
D-brane configurations would then be classified by
$KR$-theory. \cite{Witten:1998} 

Returning to the case of $n=4$, the orbifold limit of $K3$, we find
\begin{equation}
K_G^0(\bT^4)=R^8\oplus(R/J)^8,
\end{equation}
which as an abelian group is $\bZ^{24}$. This result is consistent
with the Localization Theorem (Theorem \ref{thm:localization}), which
says that $K_G^0(\bT^4)_{(J)}\cong  K_G^0((\bT^4)^G)_{(J)} \cong
(K_G^0(\pt)^{16})_{(J)}$, since
$R_{(J)}\cong (R/J)_{(J)}\cong K_G(\pt)_{(J)}$.
Note that the equivariant $K$-theory of $\bT^4$ is isomorphic to the $K$-theory
of $K3$ as an abelian group, but has the added benefit of  an
$R$-module structure on it. Using the $R$-module 
structure of $K_G^*(\bT^4)$ as well as some facts about
$G$-equivariant $K$-theory and the 
homological algebra of $R(G)$ given in the Appendix, we can now
determine the $R$-module structure on $K^*_G(M,\partial M) \cong
K^*(N,\partial N)$.  The following is a refinement of the results of
Section \ref{sec:KK3}, and can be viewed as the main mathematical
result of this paper. 
\begin{theorem}
\label{KGintM}
As before, let $G=\bZ_2$ act on $\bT^4=\bR^4/\bZ^4$ by multiplication
by $-1$ on $\bR^4$, and let $M$ be the result of removing $16$ open
balls from $\bT^4$, one ball around each fixed point of the $G$-action. 
Then \begin{equation}
\label{eq:KGintM}
K^0_G(M, \partial M)\cong (R/I)^7,\quad K^1_G(M,\partial M)\cong
(R/I)^{10}\oplus (R/2I)^5
\end{equation}
as $R$-modules.
\end{theorem}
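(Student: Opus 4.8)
The plan is to reduce everything to the restriction map to the sixteen fixed points, for which the source $K_G^0(\bT^4)$ is already understood as an $R$-module. Let $W\subset\bT^4$ be the union of the sixteen closed balls whose interiors were removed to form $M$, so $\bT^4 = M\cup_{\partial M} W$ and $W$ deformation retracts $G$-equivariantly onto the fixed set $F=(\bT^4)^G$ of sixteen points. Collapsing $W$ identifies $\bT^4/W$ with $M/\partial M$, whence $K_G^*(M,\partial M)\cong K_G^*(\bT^4,W)$. Feeding $(\bT^4,W)$ into \eqref{eq:clgsub} with $C=W\simeq_G F$, and using $K_G^1(\bT^4)=0$ from Theorem~\ref{thm:equivKTn} together with $K_G^0(F)=R^{16}$, $K_G^1(F)=0$, the long exact sequence collapses to
\[
0 \to K_G^0(M,\partial M) \to K_G^0(\bT^4) \xrightarrow{\ \rho\ } R^{16} \to K_G^1(M,\partial M) \to 0,
\]
where $\rho$ is restriction of equivariant bundles to the fixed points and $K_G^0(\bT^4)\cong R^8\oplus (R/J)^8$. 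Thus $K_G^0(M,\partial M)\cong\ker\rho$ and $K_G^1(M,\partial M)\cong\coker\rho$ as $R$-modules, and the whole theorem reduces to analyzing this single map.

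Next I would pin down $\ker\rho$ by localizing. At $J$, the Localization Theorem~\ref{thm:localization} makes $\rho$ an isomorphism, so $(\ker\rho)_{(J)}=0$. At $I$ the $R/J$ summands of the source vanish, leaving $\bQ^8$; identifying the $I$-localization with the identity component $(K^0(\bT^4)\otimes\bQ)^G$ of the rational equivariant Chern character, $\rho_{(I)}$ becomes ordinary restriction of $H^{\mathrm{ev}}(\bT^4)\otimes\bQ$ to the points, which only detects the rank ($H^0$) and so has rank $1$ and kernel $\bQ^7$. Since $\ker\rho$ is a submodule of the $\bZ$-torsion-free module $R^8\oplus(R/J)^8$, it is torsion-free; being of rank $7$ and supported at $I$ (so $t$ acts as $+1$ rationally, hence integrally on a torsion-free module), it must be $(R/I)^7$. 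This gives the first isomorphism in \eqref{eq:KGintM}.

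For the cokernel, the same localizations show $\coker\rho$ is supported at $I$ rationally, of rank $15$ with $t=+1$ modulo torsion, while its underlying abelian group is $\bZ^{15}\oplus\bZ_2^5$ by the computation of Section~\ref{sec:KK3} (applied via $K_G^*(M,\partial M)\cong K^*(N,\partial N)$, valid because $G$ acts freely on $M$). The only remaining question is the $R$-module structure of the $2$-torsion: whether each $\bZ_2$ splits off as $R/(2,I)$ or is glued to a free generator as $R/2I$. I expect this to be the main obstacle, since both possibilities share the same underlying group and the same $I$- and $J$-localizations; they are distinguished only $2$-locally, by whether $(t-1)$ maps the free part onto the torsion. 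In particular the number of $R/2I$ summands is genuinely invisible to the group-level work of Section~\ref{sec:KK3} and cannot be recovered from the $\Ext$ computations alone.

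To resolve it I would analyze $\rho$, equivalently $\im\rho\subset R^{16}$, at the prime $(2,I)$. Concretely I would track the restrictions to the sixteen fixed points of the generators of $K_G^0(\bT^4)$ produced by the inductive argument of Theorem~\ref{thm:equivKTn} (noting that the fixed points are distributed among the strata $Y_k$, with the $R$- and $R/J$-summands arising from even- and odd-dimensional strata), and then compute the induced $(t-1)$-action on $\coker\rho$ modulo $2$. The goal is to show this action has rank exactly $5$, so that precisely five of the fifteen free generators acquire nonzero $2$-torsion in their image under $t-1$; the resulting extension is then identified using Propositions~\ref{prop:extRJbyR} and \ref{prop:extRIbyRJ} of the Appendix. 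This forces five copies of $R/2I$ and ten copies of $R/I$, the unique $R$-module structure compatible with the localizations and the known group, giving $\coker\rho\cong (R/I)^{10}\oplus (R/2I)^5$ and hence the second isomorphism in \eqref{eq:KGintM}.
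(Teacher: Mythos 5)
Your reduction is sound and in fact runs parallel to the paper's: the paper also passes to a long exact sequence for a pair with total space $\bT^4$, except that it uses the pair $(\bT^4,\partial M)$, so that the restriction map lands in $K_G^0(16S^3)\cong (R/2I)^{16}$ rather than in $K_G^0((\bT^4)^G)\cong R^{16}$ as in your sequence. Your treatment of $K^0_G(M,\partial M)$ is fine (the paper gets $(R/I)^7$ more directly from $K^0(N,\partial N)\cong\bZ^7$, vanishing of the $J$-localization, and Propositions \ref{prop:RstrZ} and \ref{prop:RstrZZ}, but your kernel argument is a legitimate alternative), and your reduction of $K^1_G(M,\partial M)$ to determining the number $k$ of $R/2I$ summands, $0\le k\le 5$, matches the paper exactly.

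The gap is the last step: you never actually prove $k=5$. You correctly identify that this is the crux and that it is invisible to the group-level data, but what you offer is a program --- ``I would track the restrictions of the generators \dots the goal is to show this action has rank exactly $5$'' --- with no argument that the rank is $5$ rather than any other value, and the proposed generator-chase through the induction of Theorem \ref{thm:equivKTn} is exactly the kind of computation a proof needs to carry out, not merely announce. The irony is that your own exact sequence already contains a soft argument that closes the gap, and it is the one the paper uses (in its $\partial M$ incarnation): the image of $\rho$ contains the diagonal copy $\Delta(R)\subset R^{16}$, since $r\cdot 1_{\bT^4}$ restricts to $r$ at every fixed point; hence $\coker\rho$ is a quotient of $R^{16}/\Delta(R)\cong R^{15}$ and so is generated by at most $15$ elements as an $R$-module. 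But $(R/I)^{15-k}\oplus\bigl(R/(2,I)\bigr)^{5-k}\oplus(R/2I)^k$ requires $20-k$ generators (reduce modulo the maximal ideal $(2,I)$), so $20-k\le 15$, forcing $k=5$ with no computation of $\rho$ beyond its value on the trivial bundle. Without this (or an executed version of your mod-$2$ rank computation), the proof is incomplete precisely at the point where the theorem's content lies.
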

\begin{proof}
With notation as before, $(N,\partial N)$ is obtained from
$(M,\partial M)$ by dividing out by a free 
$\bZ_2$-action, so by Theorem \ref{thm:localization}
\begin{equation}
K_G^*(M,\partial M)\cong K^*(N,\partial N)
\end{equation}
as $\bZ$-modules. Furthermore, Theorem \ref{thm:localization} tells us
that $K_G^*(M,\partial M)$ localized at $J$ must be zero. This along
with the result $K^0(N, \partial N)\cong\bZ^7$ as a $\bZ$-module, found
in the previous section, shows that as an $R$-module,
\[
K^0_G(M, \partial M)\cong (R/I)^7,
\]
by using Propositions \ref{prop:RstrZ} and \ref{prop:RstrZZ} from the
Appendix. Alternatively, the $R$-module structure on $K^0_G(M,
\partial M)$ is determined by the action of $t$, which by Proposition
\ref{prop:equivKfree} from the Appendix is given by cup-product with
the class of a line bundle in $K^0(N)$, whose first Chern class is
torsion. Since $K^0(N, \partial N)$ is torsion-free, this action has
to be trivial.

Since $K^1_G(M,\partial M)\cong K^1(N, \partial
N)\cong\bZ^{15}\oplus\bZ_2^5$ as a $\bZ$-module, and must be zero when
localized at $J$, it is an extension of $(R/I)^{15}$ by $(R/(2,I))^5$, and
the possible $R$-module structures are: 
\begin{equation}
\label{eq:K1Grel}
(R/I)^{15-k}\oplus(R/(2,I))^{5-k}\oplus(R/2I)^k,
\end{equation}
where $0\leq k\leq 5$. There are a few ways to 
determine the value of $k$.  One method is to use the long exact
sequence induced on equivariant $K$-theory for the pair $(\bT^4,\partial M)$:
\begin{equation}
\label{eq:pairT4bdM}
\cdots \to K_G^0(\bT^4,\partial M)\xrightarrow{\alpha} K_G^0(\bT^4)
\xrightarrow{\beta} K_G^0(\partial M) \xrightarrow{\gamma}
K_G^1(\bT^4,\partial M)\to K_G^1(\bT^4)\to\cdots. 
\end{equation}
By excision, 
\begin{align}
K_G^*(\bT^4,\partial M) &\cong K_G^*(\bT^4-\partial M)\notag\\ &\cong
K^*_G(\text{int }M)\oplus K_G^*(16\bR^4)\notag\\ &\cong  K^*_G(M,
\partial M)\oplus (K_G^*(\bR^4))^{16}.
\end{align}
$K_G^*(\bR^4)$ is given by equation \eqref{eq:kgrn} and is all in
degree zero, so
\[
K_G^1(M,\partial M)\cong K_G^1(\bT^4, \partial M).
\]
The $R$-module
structure of $K_G^*(\partial M)\cong K_G^*(16S^3)$ is given in Example
\ref{ex:S3}. Plugging this, as well as $K_G^*(\bT^4)$ from Theorem
\ref{thm:equivKTn}, into the long exact sequence \eqref{eq:pairT4bdM},
we get the exact sequence
\begin{equation}
\label{eq:LESR}
0\to(R/I)^{16}\to R^{16}\oplus(R/I)^7\xrightarrow{\alpha}
R^8\oplus(R/J)^8\xrightarrow{\beta} (R/2I)^{16}\xrightarrow{\gamma}
K_G^1(M,\partial M)\to 0
\end{equation}
or
\[
0 \to (\coker \alpha \cong \im \beta) \to (R/2I)^{16}\xrightarrow{\gamma}
K_G^1(M,\partial M)\to 0.
\]
Now $\im \beta$ contains the
diagonal copy $\Delta(R/2I)$ of $R/2I$ inside $(R/2I)^{16}$. One can
see this as follows: the trivial line bundle $1_{\bT^4}$ on $\bT^4$ restricts
to the trivial line bundle on each component of $\partial M$, or
in other words to $\Delta(\dot 1)$, where $\dot 1$ is the image of $1$
in $R/2I$. Since $R/2I$ is generated as an $R$-module by $\dot 1$
and $r\cdot 1_{\bT^4}$ restricts to $\Delta(r\cdot \dot 1)$ for $r\in
R$, $\Delta(R/2I) \subseteq \im \beta$. Since
$(R/2I)^{16} /\Delta(R/2I) \cong (R/2I)^{15}$, $K_G^1(M,\partial M)$
is a quotient of $(R/2I)^{15}$. Therefore it
has at most $15$ cyclic summands as an $R$-module, so
$(15-k)+(5-k)+k=20-k\le 15$ and $k=5$, which completes the proof.
\end{proof}

Since $K^*(N)$ is used to classify $D$-brane charges in the type IIB
theory, we can also  compute the absolute (i.e., 
not rel boundary) equivariant $K$-theory of $M$.
\begin{theorem}
\label{prop:KGM}
Using the same notation as above
\begin{equation}
K^0_G(M)\cong (R/2I) \oplus (R/(2,I))^4\oplus (R/I)^6,\quad K^{-1}_G(M)\cong
(R/I)^{15}
\end{equation}
as $R$-modules.
\end{theorem}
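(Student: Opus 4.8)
The plan is to run the argument in close parallel with the proof of Theorem~\ref{KGintM}, using that $G$ acts \emph{freely} on $M$ once the fixed points have been removed. By Theorem~\ref{thm:localization} the groups $K^*_G(M)$ agree with $K^*(N)$ as abelian groups, so \eqref{eq:KN} already gives $K^0_G(M)\cong\bZ^7\oplus\bZ_2^5$ and $K^{-1}_G(M)\cong\bZ^{15}$; only the $R$-module structure is at issue. By Proposition~\ref{prop:equivKfree} that structure is the action of $t$ by cup product with the class $[L]\in K^0(N)$ of the complex line bundle associated to the double cover $M\to N$, whose first Chern class $c_1(L)=\beta(w_1)$ is the nonzero $2$-torsion class in $H^2(N)\cong\bZ^6\oplus\bZ_2^5$ coming from the classifying map $w_1\in H^1(N;\bZ_2)$ of the cover. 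Since $M$ has no fixed points, Theorem~\ref{thm:localization} also forces $K^*_G(M)_{(J)}=0$, so no summand isomorphic to $R$, $R/J$, or $R/2J$ can occur.

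The factor $K^{-1}_G(M)$ is immediate: it is torsion-free, $J$-locally trivial, and carries an involutive $t$-action (as $[L]^2=1$). The image of $t-1$ is therefore $2$-torsion, hence zero in a torsion-free group, so $t$ acts trivially and $K^{-1}_G(M)\cong(R/I)^{15}$ by Propositions~\ref{prop:RstrZ} and~\ref{prop:RstrZZ}.

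For $K^0_G(M)$, the $J$-local triviality together with the Appendix results (Propositions~\ref{prop:RstrZ},~\ref{prop:RstrZZ},~\ref{prop:extRJbyR}, and~\ref{prop:extRIbyRJ}) leaves as possible summands only $R/I$, $R/(2,I)$, $R/2I$, and the unipotent module $R/2R$, so that $K^0_G(M)\cong(R/2I)^p\oplus(R/2R)^q\oplus(R/(2,I))^b\oplus(R/I)^c$ with $p+c=7$ and $p+2q+b=5$. To pin down the exponents I would compute the action of $t$ directly from the ring structure of $K^0(N)$. Set $x=[L]-1$; then $x$ lies in filtration $2$ with leading term $c_1(L)$, and since $2c_1(L)=0$ and $H^4(N)=0$ one gets $2x\in F^4K^0(N)=0$, so $x$ is a nonzero element of order $2$. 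Both $x\cdot(\text{torsion})$ and $x\cdot\widetilde K^0(N)$ then lie in $F^4K^0(N)=0$, so for $b=\mathrm{rk}(b)\cdot 1+\tilde b$ we have $x\cdot b=\mathrm{rk}(b)\,x$. Hence $\im(t-1)=\{0,x\}$ has order $2$ and $t$ acts trivially on the torsion subgroup. The first fact gives $p+q=1$, and the second excludes the block $R/2R$ (so $q=0$); therefore $p=1$, $b=4$, $c=6$, i.e.\ $K^0_G(M)\cong(R/2I)\oplus(R/(2,I))^4\oplus(R/I)^6$.

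The crux --- and the main obstacle --- is exactly this last step: distinguishing the indecomposable ``linked'' module $R/2I$ from the split $R/I\oplus R/(2,I)$ and from $R/2R$, all three of which coincide as abelian groups. This cannot be read off from additive data alone and requires genuine multiplicative input, namely the relation $x^2=-2x$ together with the vanishing of the top filtration $F^4K^0(N)=0$. Should controlling the ring structure prove awkward, the alternative is to feed Theorem~\ref{KGintM}, Example~\ref{ex:S3}, and the already-established $K^{-1}_G(M)\cong(R/I)^{15}$ into the six-term exact sequence of the pair $(M,\partial M)$ and count cyclic $R$-summands by a diagonal argument of the type used for \eqref{eq:LESR}; there the difficulty migrates to identifying the connecting maps as honest $R$-module homomorphisms.
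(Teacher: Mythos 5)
Your proposal is correct and follows essentially the same route as the paper: both rest on Proposition~\ref{prop:equivKfree} together with the fact that products of reduced classes in $K^0(N)$ vanish (you via $F^2\cdot F^2\subseteq F^4K^0(N)=0$, the paper via the Chern character ring isomorphism $K^0(N)\cong\bZ\oplus H^2(N)$ with $H^2(N)$ a square-zero ideal), so that $t-1$ acts by multiplication by the order-two class $x=[L]-1$, and the resulting $\im(t-1)=\{0,x\}$ pins down exactly one $R/2I$ summand just as the paper's annihilator/parity argument does. Your explicit consideration and exclusion of the unipotent block $R/2R$, which the paper's enumeration of possible summands omits without comment, is a small point of extra care, but otherwise the two arguments coincide.
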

\begin{proof}
We start with equation \eqref{eq:KN}, which determines $K^*_G(M)$ as a
$\bZ$-module. Just as in the proof above, we immediately conclude that
$K^{-1}_G(M)\cong (R/I)^{15}$ and that 
\[
K^0_G(M)\cong (R/I)^{7-k}\oplus(R/(2,I))^{5-k}\oplus(R/2I)^k,
\]
where $0\leq k\leq 5$. To finish the proof, we can use Proposition
\ref{prop:equivKfree} of the Appendix. Since $N=M/G$ has all its
even-dimensional cohomology in degrees $0$ and $2$, the Chern
character is a ring isomorphism $K^0(N) \to \bZ \oplus H^2(N,\bZ)$,
with $H^2(N,\bZ)$ an ideal whose square is zero, and
the action of $t$ corresponds to multiplication by $1+c$, where $c\in
H^2(N,\bZ)$ corresponds to the $2$-fold covering map $M\to M/G$. 
Note that $\bZ\cdot 1 + \bZ_2\cdot c$ is a subring of $K^0(N)$
isomorphic to $R/2I$, and Proposition \ref{prop:equivKfree} says that
the $R$-module structure on $K^0_G(M)$ comes from the action of this
subring. However, one can
also see that it is impossible to get two elements of $K^0_G(M)$, each
with annihilator $2I$, which are linearly independent over $R/2I$. For
suppose such elements had the form $n_j+x_j$, with $n_j\in \bZ$ and
$x_j\in H^2(N,\bZ)$. The condition that the annihilator of $n_j + x_j$
is $2I$ means
$n_j$ is odd. But then $n_1+n_2$ is even, so the sum of the two elements
is annihilated by $I$, and so they are not linearly independent over
$R/2I$. So $K^0_G(M)$ cannot have more than one summand isomorphic to $R/2I$.
\end{proof}


\section{Conclusion}
\label{sec:concl}

We have seen that the group that classifies the BPS D-branes in type-I
superstring theory compactified on $\bT^4$, $KO(\bT^4\times \bR^6)$,
is isomorphic to $\bZ^6\oplus {\bZ_2}^5$, which injects into both
$K^{-1}\bigl((N, \partial N)\times\bR^6\bigr) \cong \bZ^{15}
\oplus{\bZ_2}^5$ and $K^0(N\times\bR^6)$ with an isomorphism on the
torsion. The isomorphism on torsion is
significant because the torsion brane charges occurring in the type I
theory  can only be explained by $D$-branes carrying torsion charge in
the type II theories and could not be explained by the $K$-theory of
$K3$. There would also be no way to obtain torsion charges from charge
groups for R-R or NS-NS fields living in cohomology of $K3$, since
once again this is torsion-free. To understand the difference in the free ranks, we must first understand which integral brane charges we expect to correspond to stable $D$-branes in the different theories. We present two possible avenues of
future research to explain the differences in the free rank. 

 One possibility is that we could need to include extra $\bZ$ summands
 that do not appear in $KO^{-6}(\bT^4)$ corresponding to other
 non-supersymmetric branes in type I on 
$\bT^4$ (such as those discussed in \cite{MR1731758}), since these would have 
$K$-theoretic charges living in the other groups listed in Table
\ref{table:KOgroups}. For example, D$6$-branes with real Chan-Paton
bundles should have charges living in $KO^{-3}(\bT^4)\cong \bZ^4$. It
is unknown which non-BPS branes in the type I theory transform to BPS
branes in the type IIA theory. Further research into this phenomenon
needs to be done before completing the classification. A correct
classification of the BPS brane charges on the type IIA side is a
powerful tool when studying non-BPS branes that map to BPS
branes. If we knew all of the stable brane charges in the type IIA
theory, it would give an upper bound on the additional stable non-BPS
branes we must include in our classification of type I $D$-branes by
the duality. It is important to note that it would not be enough to
just know all of the stable brane charges in the type IIA theory, but
we would also need to know which branes carry RR charge and which
carry NS-NS charge. However, if we knew all of the stable brane
charges and had a correct classification of the $D$-brane charges  
via $K$-theory, by elimination we would know the stable NS-NS charges
as well. 

To understand which $D$-branes should be included in the
classification, we believe the next step is to determine exactly what
cycles a $D$-brane can wrap. In the literature, when
analyzing different aspects of this duality, different authors have
used both  a smooth $K3$ and the orbifold limit, $\bT^4/\bZ_2$. In our
calculation we 
combined different features from a smooth $K3$ and $\bT^4/\bZ_2$. By
looking at features from both points in the moduli space we were able
to describe the torsion brane charges successfully. To fully
understand the discrepancy in the free rank, one needs to determine
which cycles come from which description and which ones
should be included in the classification. 

The type IIA theory on $K3$ has $24$ gauge bosons, all coming from the
RR sector. $22$ of them correspond to reductions of the type IIA RR
$3$-form potential on $2$-cycles of $K3$. Of these, $19$ correspond to
anti-self-dual forms, and $16$ of those map to the $16$ gauge bosons
in the heterotic theory that come from the Cartan subalgebra of the
rank-$16$ gauge group \cite[p.\ 11]{Aharony:2003}. In the type IIA
theory at the orbifold limit these $16$ gauge bosons correspond to
$D2$-branes wrapping $2$-cycles at the singularities. The $2$-cycles
correspond to the blow-up of the orientifold fixed points and go to
zero size at the orbifold point. While the sizes of the $2$-cycles go
to zero, the $D$-brane tensions remain finite \cite{Majumder:1999}.
These $16$ $D2$-branes are supersymmetric since the cycles they wrap
are. When the radius of the circle of the original torus that passes
through two fixed points is smaller than a certain critical radius,
the two BPS $D2$-branes associated to the fixed points will decay into
a single stable non-BPS $D2$-brane. The single non-BPS $D2$-brane can
be viewed as wrapped around a single $2$-cycle of $K3$ that is
homologically equivalent to the sum of the supersymmetric $2$-cycles
corresponding to the two fixed points \cite{Majumder:1999}. In this
region the non-BPS brane has a lower mass than that of the combined
system of two BPS branes, so the non-BPS brane configuration is the
stable one. Performing a $T$-duality transformation sends these
$D2$-branes to $D$-strings between the different $NS5$-branes in the
type IIB theory, and sends the blow-up modes that determine the $2$-cycles in
the type IIA case to moduli determining the location of the
$NS5$-branes \cite{Majumder:1999, Aharony:2003}. The locations of the
$NS5$ branes affect the mass of the $D$-string  configurations
(and hence the $D2$-branes in the IIA theory, by $T$-duality), since
changing the locations of the $NS5$-branes changes the lengths of the
strings \cite{Majumder:1999}. In our computation of stable $D$-brane
charges in the type IIB theory, we moved the $NS5$-branes away from the
orbifold fixed points, changing the mass of the BPS brane
system. However, in the type IIA theory we looked at the interior of
$N$. When we excise the fixed points the $2$-cycles that shrink to
size zero at the orbifold points can no longer shrink to zero size in
$\mathring N$. In $\mathring N$ there do exist $2$-cycles that are the
sums of pairs of the size zero $2$-cycles with a factor of a half as
can be seen by looking at the Kummer lattice. More research needs to
be performed to determine if we want to include the charges from the
BPS branes or the non-BPS branes. Note that since a single non-BPS
brane corresponds to 2 BPS branes, this could possibly reduce the free
rank of the $K$-theoretic classification of $D$-brane charges in the
two type II theories. The $R(G)$-module structure of
$K_{\bZ_2}^*(M,\partial M)$ and $K^*_{\bZ_2}(M)$ provides added
constraints (beyond those coming
from the abelian group structure) that can be used to complete this
determination, since it provides added information about where
different charges in $N$ come from in relation to the orbifold
$\bT^4/\bZ_2$. 

Another benefit of Theorems \ref{thm:equivKTn} and \ref{KGintM}
becomes apparent when trying to include a twisting due to an
$H$-flux. On an orbifold $X/G$, it is unclear what is meant by the
$H$-flux. It does not make sense for $H$ to live in $H^3(X;\bZ)$ since
to make any sense on the orbifold, $H$ would have to be $G$
invariant. Recent work by Distler, Freed and Moore in
\cite{Distler:2009} proposes using more exotic twistings involving
equivariant cohomology, but the precise definition of the $H$-flux on
an orbifold remains an open problem. Knowing the equivariant
structures of $K_{\bZ_2}^*(M,\partial M)$ and $K_{\bZ_2}^*(\bT^n)$
allows us to apply twistings that take advantage of the equivariant
structure. Furthermore, when $n=4$, Theorem \ref{thm:equivKTn} shows
that considering $\bZ_2$-equivariant states on the orbifold limit of
$K3$ gives a $D$-brane spectrum equivalent to the one arising from
considering states 
on $K3$ itself. Note that there can be no non-trivial $H$-flux on $K3$
(since $H^3(K3)=0$), but it may be possible to apply a twisting such
as one of the ones proposed in \cite{Distler:2009} or \cite{Braun:2002} to $K_{\bZ_2}^*(\bT^4)$. 

While showing that the stable D-brane configurations in the two theories
match does not prove the two theories are dual, this particular
example illustrates how ensuring that stable D-brane configurations 
match is a useful first step in checking a possible duality. By
looking at the stable D-branes in this case we saw immediately that we
did not want to use $K3$, but rather a desingularized version of the
orbifold blow-down, as matched with our physical
intuition. Furthermore, the $K$-theoretic classification of $D$-branes
provides useful tools for studying other phenomena that arise in
string theory dualities, such as non-BPS branes that map to BPS
branes.  

This example also illustrates the benefit of composing known dualities
to gain information about less understood dualities. By
composing a known duality between the type I and $\SO(32)$ heterotic
theories with one between the $\SO(32)$ heterotic and type IIA
theories to obtain a duality between the type I and type IIA theories,
and then composing that with $T$-duality between type IIA and
IIB we were able to come up with an examples of dualities between all
$5$ superstring theories. Knowing what the theory looks like in all of
the superstring theories provides more information to use when testing
possible dualities.  


\section{Appendix: Some Homological Algebra over $R(\bZ_2)$}
\label{sec:appendix}

This appendix collects together some facts about the homological
algebra of the representation ring $R=R(G)$ for $G=\bZ_2$, which are
useful for studying $G$-equivariant $K$-theory. The connection is that
$G$-equivariant $K$-groups are always modules over $R=R(G)$, and the
$R$-module structure carries more information than just the abelian
group structure of the $K_G$-groups. The basic reference for
representation rings is \cite{MR0248277} and the basic reference for
equivariant $K$-theory is \cite{MR0234452}.

Recall that for a compact group $G$, $R(G)$ is the free abelian group
on the equivalence classes of irreducible (finite-dimensional complex)
representations of $G$, with multiplication coming from the tensor
product of representations. When $G$ is also abelian, $R(G)$ is just
the group ring of the Pontrjagin dual group $\widehat G$.  In what
follows, we always take $G=\bZ_2$, $R=R(G)=\bZ[t]/(t^2-1)$. The
generator $t$ corresponds to the nontrivial character of $G$. The ring
$R$ has two important prime ideals, $I=(t-1)$ (the augmentation ideal)
and $J=(t+1)$. These play symmetrical roles since there is an
automorphism of $R$ (not coming from an automorphism of $G$)
which interchanges them. In the sense of
\cite{MR0248277}, $I$ has support $\{0\}$ (i.e., just the identity
element) and $J$ has support $G$.  If we localize $R$ at $I$, we get a
local ring $R_{(I)}$ in which everything not in $I$ is invertible. In
particular, every prime $p\in \bN$ is inverted, so $R_{(I)}$ is a
$\bQ$-vector space. But $R\otimes \bQ$ splits as a direct sum
$\bQ\oplus \bQ$, with $t$ acting by $1$ on one factor and by $-1$ on
the other factor. Since $t+1$ must be invertible on $R_{(I)}$, the
summand where $t$ acts by $-1$ must die and so $R_{(I)}\cong \bQ$ with
$t$ acting by $+1$. Similarly, $R_{(J)}\cong \bQ$ with
$t$ acting by $-1$. Furthermore, we have $I\cdot J = 0$, $I =
\Ann_R(J)$, $J=\Ann_R(I)$, $R/I\cong J$ (as $R$-modules), and
$R/J\cong I$ (as $R$-modules). Also, $(R/J)_{(I)}=0$, $(R/I)_{(J)}=0$,
$(R/J)_{(J)}\cong R_{(J)}$, and $(R/I)_{(I)}\cong R_{(I)}$.  But $I+J$
is a proper $R$-submodule of $R$ (of index $2$).

The Segal Localization Theorem \cite[Proposition 4.1]{MR0234452}
specializes to the following:
\begin{theorem}[Segal]
\label{thm:localization}
Let $G=\bZ_2$ and let $X$ be a locally compact $G$-space. Then
$K^*_G(X)_{(J)}\!\cong  K^*(X^G)\otimes_{\bZ}\bQ$, with $t$ acting by
$-1$. In particular, if $G$ acts freely on $X$, then
$K^*_G(X)_{(J)}=0$, and $K^*_G(X)\cong K^*(X/G)$, at least as $\bZ$-modules.
\end{theorem}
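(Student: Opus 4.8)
The plan is to derive this as a direct specialization of the general Segal Localization Theorem, so the real work is organizing the algebra rather than proving anything new about equivariant $K$-theory.

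First I would invoke the general theorem \cite[Proposition 4.1]{MR0234452}: for a prime ideal $\mathfrak p \subset R(G)$ whose support is the subgroup $H \subseteq G$, localization at $\mathfrak p$ identifies $K^*_G(X)_{\mathfrak p}$ with $K^*_G(X^H)_{\mathfrak p}$, where $X^H$ is the $H$-fixed set. Here I take $\mathfrak p = J$. As recorded in the Appendix, $J$ has support all of $G = \bZ_2$ (unlike $I$, whose support is $\{0\}$), so $X^H = X^G$ and the general theorem gives
$$K^*_G(X)_{(J)} \cong K^*_G(X^G)_{(J)}.$$

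Next I would compute the right-hand side. Since $G$ acts trivially on the fixed set $X^G$, equivariant $K$-theory splits off the representation ring, $K^*_G(X^G) \cong K^*(X^G) \otimes_\bZ R$. Localizing at $J$ and using the Appendix computation $R_{(J)} \cong \bQ$ with $t$ acting by $-1$, I obtain
$$K^*_G(X^G)_{(J)} \cong K^*(X^G) \otimes_\bZ R_{(J)} \cong K^*(X^G) \otimes_\bZ \bQ,$$
with $t$ acting by $-1$, which is exactly the asserted formula.

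The ``in particular'' clause then follows in two pieces. If $G$ acts freely then $X^G = \emptyset$, so $K^*(X^G) = 0$ and hence $K^*_G(X)_{(J)} = 0$ immediately. For the final isomorphism $K^*_G(X) \cong K^*(X/G)$, I would appeal to descent for free actions: when $X \to X/G$ is a principal $G$-bundle, pullback gives an equivalence between vector bundles on $X/G$ and $G$-equivariant vector bundles on $X$, inducing the isomorphism on $K$-theory (again \cite{MR0234452}). I would claim this only as an isomorphism of $\bZ$-modules, since the $R$-module structure on the left is invisible on the right.

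The main obstacle is bookkeeping rather than substance: all the depth is hidden in the cited general localization theorem, so the care needed is in correctly identifying the support of $J$ (so that the relevant fixed set is $X^G$, not a fixed set for a proper subgroup) and in tracking that $t$ acts by $-1$ throughout, which is precisely what distinguishes localization at $J$ from the parallel—but in this geometric setting far less useful—localization at $I$, where the fixed set would be empty and the localization would vanish.
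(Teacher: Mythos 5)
Your argument is correct and is essentially the paper's own: the paper offers no proof at all, simply recording that Segal's Proposition 4.1 specializes to this statement, and your derivation fills in exactly the intended details (the support of $J$ is all of $G$, $R_{(J)}\cong\bQ$ with $t$ acting by $-1$, and descent for free actions giving $K^*_G(X)\cong K^*(X/G)$ as $\bZ$-modules). One correction to your closing aside: the support of $I$ is the trivial subgroup, so the relevant fixed set for localization at $I$ is $X^{\{e\}}=X$ itself, not the empty set; the localization theorem at $I$ is therefore vacuous (a tautology) rather than forcing the localized group to vanish.
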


When $G$ acts freely on $X$, one can make Theorem
\ref{thm:localization} a bit more precise.
\begin{proposition}
\label{prop:equivKfree}
Let $G=\bZ_2$ and let $X$ be a compact {\bfseries free} $G$-space. 
Then the $R(G)$-module structure on $K^*_G(X)\cong K^*(X/G)$ is
defined by letting $t$ act by tensoring with the line bundle $V$
with $c_1(V)=c$, where $c$ is the image in $H^2(X/G, \bZ)$ under the
Bockstein homomorphism of the class in $H^1(X/G, \bZ_2)$ classifying
the $2$-to-$1$  covering map $X\to X/G$. 
{\lp}If $X$ is a connected reasonable space, such as a manifold, then
$H^1(X/G, \bZ_2) \cong \Hom (\pi_1(X/G), \bZ_2)$ classifies $2$-to-$1$
covering  spaces of $X/G$, by covering space theory.
One can also realize $V$ more explicitly as the fiber product
$X\times_G \bC$, 
where $G$ acts on $\bC$ by the nontrivial character $t$.{\rp}

If $A$ is a closed $G$-invariant subspace of $X$, then the
$R(G)$-module structure on 
\[
K^*_G(X,A)\cong K^*(Y, B), \quad Y=X/G,\, B=A/G,
\]
is again
defined by letting $t$ act by cup-product with $[V]\in
K^0(X/G)$. {\lp}Recall that for any pair $(Y,B)$, we have the cup-product
$K^0(Y)\otimes K^*(Y,B) \to K^*(Y,B)$.{\rp}
\end{proposition}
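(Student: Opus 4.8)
The plan is to prove both the absolute and relative statements by equivariant-bundle descent, tracking the $R(G)$-action through the isomorphism, and then to pin down the line bundle $V$ by a single Chern-class computation. First I would recall the standard descent picture underlying Theorem~\ref{thm:localization}: if $G=\bZ_2$ acts freely on the compact space $X$ with quotient $Y=X/G$ and projection $\pi\co X\to Y$, then $E\mapsto E/G$ and $F\mapsto\pi^*F$ are mutually inverse equivalences between $G$-equivariant complex vector bundles on $X$ and complex vector bundles on $Y$, inducing $K^*_G(X)\cong K^*(Y)$; crucially this equivalence is compatible with tensor products. The new point is to keep track of the $R(G)$-module structure. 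By definition $t\in R(G)=K^0_G(\pt)$ acts on $K^0_G(X)$ by tensoring with the pullback of the one-dimensional representation $\bC_t$ on which $G$ acts by the nontrivial character, i.e.\ with the trivial equivariant bundle $X\times\bC_t$. Its image under descent is exactly the associated line bundle $V=X\times_G\bC$, so under $K^0_G(X)\cong K^0(Y)$ multiplication by $t$ becomes tensoring with $[V]$, as claimed.

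The substantive step, and the one I expect to be the main obstacle, is the identification $c_1(V)=c$. Since $G$ acts on the fiber $\bC$ through $\{\pm 1\}\subset U(1)$, the bundle $V$ is associated to the principal $\bZ_2$-bundle $X\to Y$, so its classifying map factors as $Y\xrightarrow{w} B\bZ_2\to BU(1)$. Here $w\in H^1(Y,\bZ_2)\cong[Y,B\bZ_2]$ is precisely the class describing the double cover, and the second map $B\bZ_2=K(\bZ_2,1)\to K(\bZ,2)=BU(1)$ is the one inducing the Bockstein $\beta\co H^1(-,\bZ_2)\to H^2(-,\bZ)$ on fundamental classes; hence $c_1(V)=w^*\beta(\iota)=\beta(w)=c$. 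What makes this the real content is that it requires recognizing the map $B\bZ_2\to BU(1)$ on classifying spaces as the Bockstein, not merely knowing that $c_1(V)$ is torsion. As a consistency check, $V^{\otimes2}$ descends from $X\times(\bC_t\otimes\bC_t)$, which carries the trivial $G$-action since $t^2=1$, so $V^{\otimes2}$ is trivial and $2c_1(V)=0$, matching the fact that $\beta(w)$ is always $2$-torsion. The identification $H^1(Y,\bZ_2)\cong\Hom(\pi_1(Y),\bZ_2)$ in the connected case, and the description of $V$ as $X\times_G\bC$, are then just the standard dictionary of covering-space theory.

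Finally, for the relative statement I would use that the descent equivalence of bundles is natural in the space and restricts over the closed invariant subspace $A$, so it induces $K^*_G(X,A)\cong K^*(Y,B)$ with $B=A/G$, compatibly with the long exact sequences of the two pairs. The $R(G)$-module structure on the relative group is by definition the restriction of the $K^0_G(X)$-module structure, under which $t\in R(G)$ maps to $[X\times\bC_t]\in K^0_G(X)$ by pullback from the point. Since descent is monoidal, this image is $[V]\in K^0(Y)$, and the $K^0_G(X)$-action on $K^*_G(X,A)$ corresponds exactly to the cup-product action of $K^0(Y)$ on $K^*(Y,B)$; therefore $t$ acts by cup-product with $[V]$, completing the proof. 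Apart from the Chern-class identification above, every step here is formal naturality of descent.
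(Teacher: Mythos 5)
Your proof is correct and follows essentially the same route as the paper's: the $R(G)$-action of $t$ is tensoring with the trivial equivariant bundle $X\times\bC_t$, which under the descent isomorphism $K^*_G(X)\cong K^*(X/G)$ (resp.\ $K^*_G(X,A)\cong K^*(Y,B)$) becomes tensoring (resp.\ cup product) with $V=X\times_G\bC$. The only difference is that you spell out the identification $c_1(V)=\beta(w)$ via the map $B\bZ_2\to BU(1)$, a point the paper dismisses as ``immediate''; your elaboration is accurate.
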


\begin{proof}
The definition of the $R(G)$-action on $K^*_G(X)$ or on $K^*_G(X,A)$
implies that the result of applying the module action of $t$
corresponds to tensoring with $(\bC, t)$, which is the same after
applying the isomorphisms $K^*_G(X)\cong K^*(Y)$ or $K^*_G(X,A)\cong
K^*(Y,B)$ as taking the vector bundle tensor product with $V$. The rest is
immediate. 
\end{proof}

\begin{corollary}
\label{cor:KGtriv}
If $G=\bZ_2$ and $X$ is a compact free $G$-space, then the
$R$-module structure on $ K^*(X/G)$ is trivial {\lp}i.e., factors
through $R/I${\rp} if and only if $c=0$ in $H^2(X/G,\bZ)$ {\lp}in
the notation of the Proposition{\rp}.
\end{corollary}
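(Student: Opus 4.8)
The plan is to reduce the stated equivalence to a single equation in $K^0(X/G)$ and then read it off from the behavior of the first Chern class. By Proposition \ref{prop:equivKfree}, the generator $t\in R$ acts on $K^*(X/G)\cong K^*_G(X)$ by cup product with $[V]\in K^0(X/G)$, where $V$ is the line bundle with $c_1(V)=c$. Since $R/I=R/(t-1)$, the $R$-module structure factors through $R/I$ precisely when the augmentation ideal $I$ acts by zero, i.e.\ when cup product with $[V]-1$ is the zero endomorphism of $K^*(X/G)$. First I would observe that this endomorphism condition collapses to the single relation $[V]=1$ in $K^0(X/G)$: applying cup product with $[V]-1$ to the unit $1\in K^0(X/G)$ forces $[V]-1=0$, and conversely if $[V]=1$ then cup product with $[V]-1$ is visibly zero. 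Thus the corollary is equivalent to the assertion $[V]=1 \iff c=0$.

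For the forward implication, I would argue that if $c=0$ then $c_1(V)=0$, so $V$ is the trivial line bundle---complex line bundles over the compact (hence paracompact) base $X/G$ being classified up to isomorphism by their first Chern class---whence $[V]=1$ in $K^0(X/G)$ and the action is trivial. For the reverse implication I would invoke the additive homomorphism $c_1\co K^0(X/G)\to H^2(X/G,\bZ)$, which is well defined because the total Chern class is multiplicative, and which restricts to the usual first Chern class on genuine line bundles. Applying $c_1$ to the hypothesis $[V]=1$ then gives $c=c_1(V)=c_1([V])=c_1(1)=0$.

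The step I expect to be the crux is this reverse implication, and the main thing to get right there is to avoid a purely rational argument. The class $c=\beta(w)$ is $2$-torsion (it lies in the image of a Bockstein, and indeed $V^{\otimes 2}$ is trivial since $G$ acts on $\bC\otimes\bC$ through $t^2=1$), so feeding $[V]=1$ into the Chern character only yields $c=0$ in $H^2(X/G,\bQ)$, i.e.\ that $c$ is torsion---information we already possess and which is too weak. The point is that $c_1$ itself, not merely its rationalization inside the Chern character, descends to a homomorphism on the Grothendieck group $K^0(X/G)$ and therefore retains the torsion part of the obstruction; this is exactly what lets $[V]=1$ detect $c=0$. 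Everything else is formal, so I anticipate no further obstacle.
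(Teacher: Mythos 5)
Your proof is correct and follows essentially the same route as the paper: the trivial action is equivalent to $[V]=1$ in $K^0(X/G)$, which holds if and only if $c=c_1(V)=0$. The only difference is that you make explicit the step the paper leaves implicit --- that $c_1$ descends to an additive homomorphism on the Grothendieck group, so that $[V]=1$ genuinely forces $c=0$ integrally rather than merely rationally --- which is a worthwhile clarification but not a different argument.
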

\begin{proof}
If $c=0$ in $H^2(X/G,\bZ)$, then (in the notation of Proposition
\ref{prop:equivKfree}) $V$ is the trivial bundle and the action of $t$
is trivial. But if $c\ne 0$, then $t\cdot 1 = [V]\ne 1$, so the
action is nontrivial.
\end{proof}

\begin{example}
\label{ex:S3}
Let $S^3$ be given the antipodal action of $G$. Then
  $K^0_G(S^3) \cong R/2I$ and $K^1_G(S^3) \cong R/I$. Indeed, we know
  that $K^0_G(S^3) \cong 
  K^0(S^3/G) = K^0(\bR\bP^3) \cong \bZ\oplus \bZ/2$ as an abelian
  group, but by Corollary \ref{cor:KGtriv}, the $R$-module structure must be
  nontrivial, so this follows from Proposition \ref{prop:RstrZZtwo}
  below. The $R$-module structure on $K^1_G(S^3) \cong 
  K^1(S^3/G) = K^1(\bR\bP^3) \cong \bZ$ must be trivial by
  Proposition \ref{prop:RstrZ} below. Alternatively, one can compute
  directly from the exact   sequence
\[
0 = K^1_G(D^4) \to 
K^1_G(S^3)\to K^0_G(\text{int}\,D^4) \xrightarrow{\beta}
 K^0_G(D^4) \to K^0_G(S^3)
\to K^1_G(\text{int}\,D^4)=0.
\]
We have $K^0_G(D^4) \cong K^0_G(\pt) = R$ since $D^4$ is equivariantly
contractible, and $K^0_G(\text{int}\,D^4) \cong R$ by equivariant Bott
periodicity. So one only needs to compute the map $\beta$, which is
multiplication by the Bott element. This is the exterior algebra
complex of $(\bC,t)^2$, which is $2(1-t)$. So the image of $\beta$ is
exactly $2I$, and the kernel of $\beta$ is $\Ann_R(2I)=J\cong R/I$.
\end{example}

We now need some facts about certain special $R$-modules.
\begin{proposition}
\label{prop:RstrZ}
Let $M$ be an $R$-module which as a $\bZ$-module is isomorphic to
$\bZ$. Then either $M\cong R/I$ or $M\cong R/J$.
\end{proposition}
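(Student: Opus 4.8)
The plan is to observe that since $M$ is isomorphic to $\bZ$ as an abelian group, the entire $R$-module structure is encoded in the single piece of data of how $t$ acts, and then to pin down that action using the relation $t^2 = 1$. Concretely, I would first note that the action of $t$ is a $\bZ$-linear endomorphism of $M \cong \bZ$, hence is given by multiplication by some integer $a \in \bZ = \Hom_{\bZ}(\bZ, \bZ)$. The relation $t^2 - 1 = 0$ in $R$ forces $a^2 = 1$, so that $a = \pm 1$; there are no other integer solutions. This is the crux of the argument, and it is elementary precisely because $\Aut_{\bZ}(\bZ) = \{\pm 1\}$.

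Next I would split into the two cases. If $a = +1$, then $t$ acts as the identity, so $t - 1$ annihilates $M$; that is, $I = (t-1) \subseteq \Ann_R(M)$, and $M$ becomes a module over $R/I \cong \bZ$. If $a = -1$, then $t$ acts by negation, so $t + 1$ annihilates $M$ and $J = (t+1) \subseteq \Ann_R(M)$, making $M$ a module over $R/J \cong \bZ$. In either case $M$ is a module over a quotient of $R$ isomorphic to $\bZ$, and it remains to identify $M$ with that quotient rather than with some proper submodule.

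To finish, I would verify cyclicity and compute the annihilator exactly. A $\bZ$-generator $m$ of $M \cong \bZ$ already generates $M$ over the subring $\bZ \subseteq R$, hence a fortiori over $R$, so $M$ is a cyclic $R$-module and $M \cong R/\Ann_R(M)$. In the case $a = +1$ one checks that $\Ann_R(M)$ contains $I$ but not $J$ (since $t+1$ acts by $2 \neq 0$ on $\bZ$), and since $R/I \cong \bZ$ has no proper submodule isomorphic to $\bZ$ as an abelian group, the inclusion $I \subseteq \Ann_R(M)$ is an equality; thus $M \cong R/I$. The case $a = -1$ is symmetric under the automorphism of $R$ interchanging $I$ and $J$, giving $M \cong R/J$. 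I expect no genuine obstacle here: the only point requiring a moment's care is confirming that the containment of ideals is an equality, which follows immediately from the fact that $t \pm 1$ acts invertibly (up to the factor $2$) on whichever factor should survive, so that $\Ann_R(M)$ cannot be larger than $I$ or $J$ respectively.
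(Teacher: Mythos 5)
Your proof is correct and follows essentially the same route as the paper's: the $R$-module structure is determined by the action of $t$, which must be multiplication by $\pm 1$ on $\bZ$, yielding $R/I$ or $R/J$. Your additional verification of cyclicity and of the exact annihilator simply makes explicit what the paper leaves as immediate (one small wording slip: you want that $\bZ$ has no proper \emph{quotient} isomorphic to $\bZ$, not submodule, to conclude $\Ann_R(M)=I$).
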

\begin{proof}
The $R$-module structure is determined by the action of $t$, which
must be an automorphism of $M$ as a $\bZ$-module. Since
$\Aut_{\bZ}(\bZ) = GL(1,\bZ) = \{1, -1\}$, there are exactly two
possibilities: $R/I$ if $t$ acts by $+1$, and $R/J$  if $t$ acts by $-1$.
\end{proof}

\begin{proposition}
\label{prop:RstrZtwo}
Let $M$ be an $R$-module which as a $\bZ$-module is isomorphic to
$\bZ_2$. Then $M\cong R/(2,I)\cong I/2I$.
\end{proposition}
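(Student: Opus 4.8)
The final statement to prove is Proposition \ref{prop:RstrZtwo}: if $M$ is an $R$-module which as a $\bZ$-module is isomorphic to $\bZ_2$, then $M\cong R/(2,I)\cong I/2I$.

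The plan is to mimic the structure of the proof of Proposition \ref{prop:RstrZ}, determining the $R$-module structure entirely from the action of the generator $t$. Since $M\cong \bZ_2$ as an abelian group, multiplication by $t$ must be a $\bZ$-module automorphism of $\bZ_2$. But $\Aut_{\bZ}(\bZ_2)$ is trivial, so $t$ acts as the identity on $M$. This forces $(t-1)=I$ to annihilate $M$, and since $2$ annihilates $M$ as well (as $M\cong\bZ_2$), the ideal $(2,I)$ kills $M$. Because $M$ is cyclic of order $2$ and $(2,I)$ is precisely the annihilator, we conclude $M\cong R/(2,I)$. First I would verify that $R/(2,I)\cong \bZ_2$ as an abelian group to confirm the module is nonzero and to pin down the isomorphism; indeed $R/(2,I)=\bZ[t]/(t^2-1,2,t-1)\cong \bF_2[t]/(t-1)\cong \bF_2$.

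The second identification, $R/(2,I)\cong I/2I$, is the only part requiring a small computation. The plan is to use the isomorphism $R/J\cong I$ of $R$-modules recorded in the Appendix (where $J=(t+1)$). Tensoring or passing to quotients, $I/2I\cong (R/J)/2(R/J)\cong R/(2,J)$. It then remains to observe that $R/(2,J)\cong R/(2,I)$: both are quotients of $\bF_2[t]/(t^2-1)=\bF_2[t]/(t-1)^2$ by the image of $t+1$ and of $t-1$ respectively, and since $t+1\equiv t-1\pmod 2$ in characteristic $2$, the ideals $(2,I)$ and $(2,J)$ coincide in $R$. Hence $I/2I\cong R/(2,J)=R/(2,I)$.

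I expect the main (though minor) obstacle to be the bookkeeping in the second isomorphism: one must be careful that reducing modulo $2$ collapses the distinction between $I$ and $J$, since $t-1$ and $t+1$ become equal in $R/2R$. Once this observation is in hand, the rest is immediate. The core of the argument—that $\Aut_{\bZ}(\bZ_2)$ is trivial and so $t$ must act as the identity—is entirely parallel to the proof of Proposition \ref{prop:RstrZ}, and carries essentially no difficulty.
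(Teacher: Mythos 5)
Your proposal is correct and follows essentially the same route as the paper: the $R$-module structure is forced because $\Aut_{\bZ}(\bZ_2)$ is trivial, so $t$ must act as the identity. Your additional verification that $R/(2,I)\cong I/2I$ (via $I\cong R/J$ and the observation that $(2,I)=(2,J)$ since $t+1=(t-1)+2$) is a correct elaboration of an identification the paper simply asserts in the statement.
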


\begin{proof}
The $R$-module structure is determined by the action of $t$, which
must be an automorphism of $M$ as a $\bZ$-module. But $\bZ_2$ has no
non-trivial automorphisms, so there is only the trivial possibility.
\end{proof}

\begin{proposition}
\label{prop:RstrZZ}
$\Ext_R^1(R/I,R/I)=0$, and thus every $R$-module extension of $R/I$ by
$R/I$ splits. Similarly, $\Ext_R^1(R/J,R/J)=0$, so every $R$-module
extension of $R/J$ by $R/J$ splits. 
\end{proposition}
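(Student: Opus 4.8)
The plan is to build an explicit free resolution of $R/I$ over $R$ and read off $\Ext^1$ directly. Since $R/I$ is cyclic, start with the augmentation $\epsilon\co R \to R/I$, whose kernel is $I=(t-1)$. The module $I$ is generated by $t-1$, and by the identities recorded above, $\Ann_R(t-1)=J$ while $\Ann_R(t+1)=I$. Hence multiplication by $t-1$ is a map $R\to R$ with image $I$ and kernel $J$, and multiplication by $t+1$ has image $J$ and kernel $I$. Splicing these together yields the periodic free resolution
\[
\cdots \xrightarrow{\,\cdot(t+1)\,} R \xrightarrow{\,\cdot(t-1)\,} R \xrightarrow{\,\cdot(t+1)\,} R \xrightarrow{\,\cdot(t-1)\,} R \xrightarrow{\ \epsilon\ } R/I \to 0.
\]
Checking exactness at each spot is exactly the statement that $\im(\cdot(t-1))=I=\ker(\cdot(t+1))$ and $\im(\cdot(t+1))=J=\ker(\cdot(t-1))$, which is immediate from the annihilator identities. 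This annihilator bookkeeping is really the only content of the argument; I expect it to be the one place to be careful.

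Next I would apply $\Hom_R(-,R/I)$ and use $\Hom_R(R,R/I)\cong R/I$ to obtain the cochain complex
\[
R/I \xrightarrow{\,\cdot(t-1)\,} R/I \xrightarrow{\,\cdot(t+1)\,} R/I \xrightarrow{\,\cdot(t-1)\,} \cdots.
\]
The key observation is that $t$ acts as $+1$ on $R/I$ (the defining property of $R/I$, cf.\ Proposition \ref{prop:RstrZ}), so on $R/I$ the operator $t-1$ is $0$ and $t+1$ is multiplication by $2$. Thus the first coboundary is $0$ and the second is multiplication by $2$, whence
\[
\Ext^1_R(R/I,R/I)\cong \ker(\,\cdot 2\ \text{on}\ R/I\,)\,/\,\im(0).
\]
Since $R/I\cong\bZ$ is torsion-free, $\ker(\cdot 2)=0$, so $\Ext^1_R(R/I,R/I)=0$.

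For the second assertion I would avoid repeating the computation by invoking the automorphism of $R$ that interchanges $I$ and $J$ (noted above): it carries $R/I$ to $R/J$ and hence induces an isomorphism $\Ext^1_R(R/I,R/I)\cong \Ext^1_R(R/J,R/J)$, so the vanishing transfers. Once the periodic resolution is in place, the whole result comes down to the fact that $\bZ$ has no $2$-torsion. (Alternatively, one can apply $\Hom_R(-,R/I)$ to $0\to I\to R\to R/I\to 0$ and use $I\cong R/J$: since $\Hom_R(I,R/I)\cong\Hom_R(R/J,R/I)=0$, because the image of the generator would have to be killed by $J$, i.e.\ by multiplication by $2$ on $R/I\cong\bZ$, and since $\Ext^1_R(R,R/I)=0$, the long exact sequence gives $\Ext^1_R(R/I,R/I)=0$ at once.)
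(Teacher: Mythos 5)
Your argument is correct, and your main route differs from the paper's. The paper proves this by dimension shifting along the single short exact sequence $0\to I\to R\to R/I\to 0$: applying $\Hom_R(-,R/I)$, it shows the connecting map $\Hom_R(R/I,R/I)\to\Hom_R(R,R/I)$ is an isomorphism, so $\Ext^1_R(R/I,R/I)\cong \Hom_R(I,R/I)$, and then kills the latter by the same annihilator argument you use (the image of the generator is annihilated by $t+1$, which acts injectively on $R/I\cong\bZ$). Your parenthetical alternative at the end is essentially this proof. Your primary argument instead builds the full $2$-periodic free resolution
\[
\cdots \xrightarrow{t+1} R \xrightarrow{t-1} R \xrightarrow{\epsilon} R/I \to 0
\]
and reads off $\Ext^1$ as $\ker(\cdot 2)/\im(0)$ on $R/I\cong\bZ$; the exactness check via $\Ann_R(t-1)=J$ and $\Ann_R(t+1)=I$ is right, and the identification of the dualized differentials as $0$ and $2$ is correct. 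This is the same technique the paper deploys for Proposition \ref{prop:extRIbyRJ} (there applied to $\Ext^1_R(R/I,R/J)$), so you have in effect unified the two computations; the resolution approach costs a little more setup but buys all the higher $\Ext$ groups at once (e.g.\ $\Ext^{2k}_R(R/I,R/I)\cong R/(2,I)$ for $k\ge 1$), whereas the paper's dimension shift is the minimal argument for $\Ext^1$ alone. Your handling of the $J$ case by the $I\leftrightarrow J$ automorphism matches the paper's ``precisely analogous'' remark.
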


\begin{proof}
We only do the first case, as the second is precisely analogous. Start
with the extension of $R$-modules $0\to I \to R \to R/I\to 0$, and
apply $\Hom_R(\underline{\phantom{X}},R/I)$. We get
\begin{multline*}
0 \to \Hom_R(R/I,R/I) \xrightarrow{\alpha} \Hom_R(R,R/I) \\
\to \Hom_R(I,R/I) \to 
\Ext_R^1(R/I,R/I) \to \Ext_R^1(R,R/I)=0.
\end{multline*}
Now $\Hom_R(R,R/I)\cong R/I$, and since any $R$-module map $R\to R/I$
is determined by the image of $1$, which is annihilated by $I$, it
comes from something in $\Hom_R(R/I,R/I)$. So the map $\alpha$ is an
isomorphism, and $\Ext_R^1(R/I,R/I)\cong \Hom_R(I,R/I)$. Since
$R/I\cong J$ (this follows from Proposition \ref{prop:RstrZ}), a 
homomorphism $I\to R/I$ is the same thing as a homomorphism
$\varphi\co I\to J$,
which is determined by the image $\varphi(t-1)$ of $t-1$. But $t-1$ is
annihilated by $t+1$, so $(t+1)\varphi(t-1)=0$. But the only element
of $J$ annihilated by $t+1$ is $0$, so $\varphi=0$.
\end{proof}

\begin{proposition}
\label{prop:extRJbyR}
$\Ext_R^1(R/J,R)=0$, and thus every $R$-module of extension of $R/J$
by $R$ splits. In addition, $\Hom_R(R/J, R)\cong R/J$. Similarly,
$\Ext_R^1(R/I,R)=0$, so every $R$-module of 
extension of $R/I$ by $R$ splits, and   $\Hom_R(R/I, R)\cong R/I$.
\end{proposition}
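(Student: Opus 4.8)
The plan is to mirror the strategy used for Proposition \ref{prop:RstrZZ}: apply the contravariant functor $\Hom_R(\underline{\phantom{X}}, R)$ to a short exact sequence presenting $R/J$, and then read off both the $\Hom$ and the $\Ext^1$ directly from the resulting long exact sequence. Concretely, I would begin with the tautological sequence of $R$-modules
\[
0 \to J \to R \to R/J \to 0,
\]
and apply $\Hom_R(\underline{\phantom{X}}, R)$. Since $R$ is free, $\Ext_R^1(R,R)=0$, so the six-term sequence truncates to
\[
0 \to \Hom_R(R/J, R) \to \Hom_R(R, R) \xrightarrow{\psi} \Hom_R(J, R) \to \Ext_R^1(R/J, R) \to 0.
\]

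The first computational step is to identify the two middle terms explicitly. Evaluation at $1$ gives $\Hom_R(R,R)\cong R$. To compute $\Hom_R(J,R)$, observe that a homomorphism $\varphi\co J \to R$ is determined by $\varphi(t+1)$, and the relation $(t-1)(t+1)=0$ (that is, $I\cdot J = 0$) forces $\varphi(t+1)\in \Ann_R(t-1)=\Ann_R(I)=J$; conversely every element of $J$ arises this way, so $\varphi\mapsto\varphi(t+1)$ identifies $\Hom_R(J,R)\cong J$. Under these two identifications the map $\psi$ is just restriction to $J$ of a multiplication operator, sending $a\in R$ to $a(t+1)\in J$; in other words $\psi$ is multiplication by $t+1$.

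The conclusion then follows from the annihilator bookkeeping already recorded in the Appendix, namely $\Ann_R(J)=I$, $R/J\cong I$, and $J=(t+1)R$. Indeed $\ker\psi = \Ann_R(t+1)=I$, whence $\Hom_R(R/J,R)\cong I\cong R/J$; and $\psi$ is surjective onto $J=(t+1)R$ by the very definition of $J$, so $\coker\psi = 0$ and hence $\Ext_R^1(R/J,R)=0$, which is equivalent to the splitting of every extension of $R/J$ by $R$. The statement for $R/I$ follows by running the identical argument with the roles of $I$ and $J$ interchanged, or more cheaply by invoking the automorphism of $R$ that swaps $I$ and $J$ (noted in the Appendix), which carries one assertion to the other. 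As a sanity check one can also package everything into the $2$-periodic free resolution $\cdots \to R \xrightarrow{t-1} R \xrightarrow{t+1} R \to R/J \to 0$ and dualize, recovering the same $\ker/\im$ computation.

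I do not anticipate a genuine obstacle: the argument is essentially formal. The only points needing a moment's care are verifying the well-definedness of the isomorphism $\Hom_R(J,R)\cong J$ (tracking that the image of the generator must land in $\Ann_R(I)$) and confirming that the induced map $\psi$ really is multiplication by $t+1$ rather than some other map, since one must keep the evaluation-at-generator identifications straight on both sides. Once those identifications are pinned down, the result is immediate from the ideal-theoretic facts about $R$ listed earlier.
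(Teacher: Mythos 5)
Your argument is correct and follows essentially the same route as the paper: both apply $\Hom_R(\underline{\phantom{X}},R)$ to the sequence $0\to J\to R\to R/J\to 0$, identify $\Hom_R(J,R)\cong J$ and $\Hom_R(R/J,R)\cong I\cong R/J$ via evaluation at the generator, and conclude $\Ext_R^1(R/J,R)=0$ from the resulting four-term exact sequence. Your explicit identification of the restriction map as multiplication by $t+1$ (hence surjective onto $J$) just makes the final vanishing step slightly more transparent than the paper's version.
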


\begin{proof}
We only do the first case of $R/J$, as the case of $R/I$ is precisely
analogous. First observe (as in the proof of Proposition
\ref{prop:RstrZZ}) that $\Hom_R(R/J, R)\cong I\cong R/J$, since a
homomorphism $R/J\to R$ is completely determined by the image of the
coset of $1$, which can be anything in $\Ann_R(J)=I$.
Consider the short exact sequence  
$$0\to J\to R\to R/J\to 0,$$ 
and apply the functor $\Hom_R (\underline{\phantom{X}}, R)$ to it. We get
\begin{equation}
\label{eq:ext}
\begin{aligned}
0 &\to \Hom_R(R/J, R) \to \Hom_R(R, R) \to \Hom_R(J, R) \\ &\to 
\Ext^1_R(R/J,R) \to \Ext^1_R(R,R)=0.
\end{aligned}
\end{equation}
We have $\Hom_R(R, R)=R$, and by the above, $\Hom_R(J, R)=J\cong
R/I$. Similarly, $\Hom_R(R/J, R)=\Hom_R(I, R)=I$. So 
\eqref{eq:ext} becomes the sequence 
$$0 \to I \to R \to R/I \to 
\Ext^1_R(R/J,R) \to 0,$$
and $\Ext^1_R(R/J,R) =0$.
\end{proof}
\begin{proposition}
\label{prop:extRIbyRJ}
If $M$ is an $R$-module that sits in a short exact sequence
\[
0 \to R/I \to M\to R/J \to 0 \text{ or }
0 \to R/J \to M\to R/I \to 0,
\]
then either $M\cong R$ or else $M\cong R/I \oplus R/J$.
\end{proposition}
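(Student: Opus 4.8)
The plan is to classify both kinds of extension through the computation of the relevant $\Ext$ groups, showing that $\Ext_R^1(R/J,R/I)\cong\bZ_2$ and, symmetrically, $\Ext_R^1(R/I,R/J)\cong\bZ_2$. The symmetry is free: the automorphism of $R$ interchanging $I$ and $J$ (noted in the Appendix) carries $R/I$ to $R/J$, so it suffices to treat one of the two short exact sequences, say $0\to R/I\to M\to R/J\to 0$, and the other follows by applying that automorphism. Once the $\Ext$ group is known to have order $2$, there are exactly two equivalence classes of extensions, and since equivalent extensions have isomorphic middle terms (five lemma), $M$ can realize at most two isomorphism types.

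To compute $\Ext_R^1(R/J,R/I)$ I would apply $\Hom_R(\underline{\phantom{X}},R/I)$ to the short exact sequence $0\to J\to R\to R/J\to 0$, exactly in the style of Propositions \ref{prop:RstrZZ} and \ref{prop:extRJbyR}. Since $R$ is free, $\Ext_R^1(R,R/I)=0$, and the sequence reduces to
\[
0\to \Hom_R(R/J,R/I)\to \Hom_R(R,R/I)\xrightarrow{f}\Hom_R(J,R/I)\to \Ext_R^1(R/J,R/I)\to 0.
\]
Here $\Hom_R(R/J,R/I)=0$, because a map must send the generator to an element of $R/I$ annihilated by $J=(t+1)$, and in $R/I\cong\bZ$ the element $t+1$ acts as multiplication by $2$, leaving only $0$. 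Using $J\cong R/I$ from the Appendix, both $\Hom_R(R,R/I)$ and $\Hom_R(J,R/I)\cong\Hom_R(R/I,R/I)$ are isomorphic to $\bZ$. The map $f$ is restriction along $J\hookrightarrow R$: the generator of $\Hom_R(R,R/I)$ is the quotient map $R\to R/I$, which sends the generator $t+1$ of $J$ to $\overline{t+1}=\bar 2$, i.e.\ to twice the generator of $\Hom_R(J,R/I)$. Hence $f$ is multiplication by $2$ and $\Ext_R^1(R/J,R/I)\cong\coker f\cong\bZ_2$.

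It remains to identify the two isomorphism types. The zero class is the split extension, with $M\cong R/I\oplus R/J$. For the nonzero class I would exhibit $R$ itself: the sequence $0\to J\to R\to R/J\to 0$ presents $R$ as an extension of $R/J$ by $J\cong R/I$, and it is nonsplit because $R\not\cong R/I\oplus R/J$ as $R$-modules. To see the latter, apply $\underline{\phantom{X}}\otimes_R R/I$: one gets $R\otimes_R R/I\cong R/I\cong\bZ$, which is torsion-free, whereas $(R/I\oplus R/J)\otimes_R R/I\cong (R/I)\oplus R/(I+J)\cong\bZ\oplus\bZ_2$ acquires $2$-torsion from the index-$2$ ideal $I+J$. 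Since these two modules are not isomorphic, the extension is genuinely nonsplit and therefore represents the unique nonzero class of $\Ext_R^1(R/J,R/I)$; every $M$ in that class is isomorphic to $R$. The case $0\to R/J\to M\to R/I\to 0$ is identical, using $0\to I\to R\to R/I\to 0$ and $I\cong R/J$, which completes the proof.

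The main obstacle is the $\Ext$ computation, and within it the precise determination that the restriction map $f$ is multiplication by $2$ rather than by a unit (which would force $\Ext_R^1=0$) or by $0$ (which would give $\Ext_R^1\cong\bZ$); the remaining identifications $R/I\cong J$, $R/J\cong I$, and $R/(I+J)\cong\bZ_2$ are already recorded in the Appendix, so the rest is bookkeeping.
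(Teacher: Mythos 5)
Your proof is correct and follows essentially the same route as the paper: both reduce the problem to showing the relevant $\Ext^1$ group has order $2$ and then exhibit $R$ and $R/I\oplus R/J$ as the two nonisomorphic middle terms. The only cosmetic difference is that you compute $\Ext^1$ from the long exact sequence attached to $0\to J\to R\to R/J\to 0$, whereas the paper reads it off from the $2$-periodic free resolution of $R/I$ (of which your sequence is the first step); your explicit verification that $R\not\cong R/I\oplus R/J$ via $\otimes_R R/I$ is a welcome elaboration of what the paper calls ``clear.''
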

\begin{proof}
Note that $R/I$ has the periodic free resolution
\[
\cdots \xrightarrow{t+1} R \xrightarrow{t-1} R \xrightarrow{t+1} R 
\xrightarrow{t-1} R \to R/I \to 0.
\]
Thus $\Ext^*_R(R/I,R/J)$ is the cohomology of the complex
\[
R/J \xrightarrow{-2} R/J \xrightarrow{0} R/J \xrightarrow{-2} R/J \to \cdots,
\]
and $\Ext^1_R(R/I,R/J)\cong R/(2,J)$, which has only two
elements. Thus there can be at most two isomorphism classes of
extensions of $R/I$ by $R/J$. Similarly for extensions of $R/J$ by
$R/I$. Since $R$ and $R/I \oplus R/J$ are clearly nonisomorphic
extensions of the desired form, they are the only possibilities.
\end{proof}
\begin{proposition}
\label{prop:RtwoI}
The $R$-module extension
\[
0\to I/2I \to R/2I \to R/I \to 0
\]
does {\bfseries not} split.
\end{proposition}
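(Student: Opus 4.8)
The plan is to prove non-splitting by an annihilator argument on the $R$-module structure, rather than by identifying the extension class in $\Ext^1_R(R/I,I/2I)$ directly. First I would record the relevant module structures. Writing $R=\bZ\cdot 1\oplus \bZ\cdot(t-1)$ as a $\bZ$-module, one checks that $I=(t-1)$ is free of rank one over $\bZ$ on the generator $t-1$, so that $I/2I\cong\bZ_2$ as an abelian group; by Proposition \ref{prop:RstrZtwo} this forces $I/2I\cong R/(2,I)$ as an $R$-module. In particular $I/2I$ is annihilated by $I$, and of course $R/I$ is annihilated by $I$ as well. The key asymmetry is then at the middle term: $R/2I$ is a cyclic $R$-module generated by the image $\dot 1$ of $1$, its annihilator is exactly $2I$, and $I$ does \textbf{not} annihilate it, since $t-1\in I$ but $t-1\notin 2I$, so that $(t-1)\cdot\dot 1=\overline{t-1}\neq 0$ in $R/2I$.

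With these facts in hand the argument is short. Suppose, for contradiction, that the sequence splits, so that $R/2I\cong I/2I\oplus R/I$ as $R$-modules. The annihilator of a direct sum is the intersection of the annihilators of the summands, and by the previous paragraph each summand is annihilated by $I$; hence $I$ would annihilate $R/2I$. This contradicts the fact that $(t-1)\cdot\dot 1\neq 0$, so no $R$-linear splitting can exist. Equivalently, a splitting would exhibit $t$ as acting trivially on all of $R/2I$ (both $R/I$ and $R/(2,I)$ are $I$-killed), whereas one computes directly that $t\cdot\dot 1=\dot 1+\overline{t-1}\neq\dot 1$; either phrasing delivers the same contradiction.

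I do not expect a genuine obstacle here: the only thing to get right is the bookkeeping, namely verifying that $I/2I$ has exactly two elements with $I$-trivial module structure, and that the annihilator of the cyclic module $R/2I$ is precisely $2I$ rather than all of $I$. An alternative route would compute $\Ext^1_R(R/I,I/2I)$ via the periodic free resolution of $R/I$ from Proposition \ref{prop:extRIbyRJ}; since $t-1$ and $t+1$ both act as $0$ on $R/(2,I)\cong I/2I$, every differential vanishes and $\Ext^1_R(R/I,I/2I)\cong I/2I\neq 0$. That establishes the existence of a non-split extension but would still require matching our particular sequence to the nonzero class, so the annihilator argument above is preferable as it settles non-splitting of \emph{this} extension outright.
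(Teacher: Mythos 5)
Your proof is correct and rests on the same key fact as the paper's: $t-1$ annihilates both $R/I$ and $I/2I\cong R/(2,I)$, but does not annihilate the generator $\dot 1$ of $R/2I$, since $t-1\notin 2I=2\bZ\cdot(t-1)$. The paper phrases this as a direct check that no candidate section $R/I\to R/2I$ can be $R$-linear, while your annihilator comparison packages the same computation as the (marginally stronger) observation that $R/2I\not\cong R/I\oplus R/(2,I)$ even as abstract $R$-modules.
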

\begin{proof}
Suppose to the contrary that we have a splitting $s\co R/I \to
R/2I$. Then $s$ is determined by $s(1+I)$, which let us say is $a+bt +
2I$. Since $s(1+I)$ projects to $1+I$, $a+bt\in 1 + I$, and can be
rewritten as $1 + c(1-t)$. But $t$ acts on $R/I$ by $+1$, so we must
have $t(1 + c(1-t))-(1 + c(1-t))\in 2I$, i.e., $(-1-2c)(1-t)\in
2I$. This is impossible since $-1-2c$ is odd, not even.
\end{proof}

\begin{proposition}
\label{prop:RstrZZtwo}
$\Ext_R^1\bigl(R/I,R/(2,I)\bigr)\cong R/(2,I)$. Thus any $R$-module
extension of $R/I$ by $R/(2,I)$ either splits or is isomorphic to
$R/2I$. The same holds with $I$ replaced everywhere by $J$.
\end{proposition}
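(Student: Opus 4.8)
The plan is to compute $\Ext_R^1\bigl(R/I,R/(2,I)\bigr)$ directly using the periodic free resolution of $R/I$ that appeared in the proof of Proposition~\ref{prop:extRIbyRJ}, namely
\[
\cdots \xrightarrow{t+1} R \xrightarrow{t-1} R \xrightarrow{t+1} R
\xrightarrow{t-1} R \to R/I \to 0.
\]
Applying $\Hom_R(\underline{\phantom{X}}, R/(2,I))$ and using that $\Hom_R(R,R/(2,I))\cong R/(2,I)$, I get the cochain complex
\[
R/(2,I) \xrightarrow{t-1} R/(2,I) \xrightarrow{t+1} R/(2,I) \xrightarrow{t-1} \cdots,
\]
so I must understand how multiplication by $t-1$ and by $t+1$ act on $R/(2,I)$. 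The key observation is that $R/(2,I)\cong\bZ_2$ as an abelian group, with $t$ acting by $+1$ (since $t\equiv 1$ modulo $I$); hence $t-1$ acts as multiplication by $0$, while $t+1$ acts as multiplication by $2\equiv 0$ in characteristic $2$. Thus \emph{both} differentials are zero, and the cohomology at the relevant spot is simply $R/(2,I)$. This gives $\Ext_R^1\bigl(R/I,R/(2,I)\bigr)\cong R/(2,I)$, a group with two elements, which immediately bounds the number of extension classes by two.

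From here the argument parallels the end of Proposition~\ref{prop:extRIbyRJ}: since there are at most two isomorphism classes of extensions of $R/I$ by $R/(2,I)$, and the split extension $R/(2,I)\oplus R/I$ is one of them, I only need to exhibit a single non-split extension to account for the other class. The natural candidate is $R/2I$, which sits in the short exact sequence
\[
0\to I/2I \to R/2I \to R/I \to 0
\]
with $I/2I\cong R/(2,I)$ (by Proposition~\ref{prop:RstrZtwo}, since $I/2I$ is $\bZ_2$ as an abelian group), and which is non-split by Proposition~\ref{prop:RtwoI}. Therefore any extension of $R/I$ by $R/(2,I)$ is isomorphic either to the split one or to $R/2I$, as claimed.

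The main obstacle, though entirely routine once spotted, is verifying that the differential $t+1$ genuinely vanishes on $R/(2,I)$: one must remember that we are computing modulo $2$, so the apparent map ``multiplication by $2$'' (which is what $t+1$ reduces to after identifying $t$ with $1$) is zero rather than injective. Getting this wrong would collapse the $\Ext$ group to $0$ and falsely force every extension to split. For the final sentence (the statement with $I$ and $J$ interchanged), I would invoke the automorphism of $R$ that swaps $I$ and $J$, as already noted in the Appendix, so no separate computation is needed.
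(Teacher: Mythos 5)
Your computation is correct and lands essentially where the paper does: the paper's own proof gives two arguments, the second of which computes $\Ext_R^1\bigl(R/I,R/(2,I)\bigr)$ from the $\Hom$--$\Ext$ sequence of $0\to I\to R\to R/I\to 0$ (finding all three $\Hom$ groups equal to $\bZ_2$), while you obtain the same answer by running the periodic free resolution of $R/I$ from Proposition~\ref{prop:extRIbyRJ} and observing that both differentials $t-1$ and $t+1$ vanish on $R/(2,I)$ --- a routine repackaging rather than a genuinely different route. Your concluding step (two extension classes, with the non-split one realized by $R/2I$ via Propositions~\ref{prop:RstrZtwo} and~\ref{prop:RtwoI}, and the $J$ case by the $I\leftrightarrow J$ automorphism) matches the logic of the paper's first argument, so the proof is complete as written.
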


\begin{proof}
\textbf{First argument.} Since any $\bZ$-module extension of $\bZ$ by
$\bZ/2$ splits, the issue is to compute the possible $R$-module
structures on $\bZ\oplus \bZ_2$, which mod $\bZ$-torsion give the
$R$-module $R/I$. Such structures are determined by the action of
$t$. If it is trivial, the module splits as $(R/I)\oplus
(R/(2,I))$. If it isn't trivial, then $t$ must map $(1,\bar0)$ to
$(1,\bar1)$, where $\bar0$ and $\bar1$ are the two elements of
$\bZ_2$. This is precisely the situation in the non-split extension
$R/2I$ of Proposition \ref{prop:RtwoI}, since $t\cdot 1 = 1 - (1-t)$,
and $1-t$ represents the nontrivial class in $I/2I$. \qedsymbol

\textbf{Second argument.} Alternatively, we can compute $\Ext$
directly, following the method used in the proof of Proposition
\ref{prop:RstrZZ}. Consider the exact sequence
\begin{multline*}
0 \to \Hom_R(R/I,R/(2,I)) \xrightarrow{\alpha} \Hom_R(R,R/(2,I)) \\
\to \Hom_R(I,R/(2,I)) \to 
\Ext_R^1(R/I,R/(2,I)) \to \Ext_R^1(R,R/(2,I))=0.
\end{multline*}
Since $R/(2,I)$ is $\bZ_2$ with the trivial action of $t$, it is easy
to see that the three $\Hom$ groups in this sequence are all isomorphic
to $\bZ_2$ as abelian groups, and thus to $R/(2,I)$ as $R$-modules (by
Proposition \ref{prop:RstrZtwo}). Thus the $\Ext$ group is also
isomorphic to $\bZ_2$ as an abelian group, and to $R/(2,I)$ as an $R$-module.
\end{proof}

In dealing with compact $G$-manifolds $X^n$ (possibly with boundary)
such as $\bT^4$ and $M$, it is sometimes useful to use equivariant
Poincar\'e duality, which applies whenever the manifold has an
equivariant spin$^c$-structure (as is the case for our examples).
Thus one has isomorphisms $K_G^*(X) \cong K_{n-*}^G(X, \partial X)$
and $K_G^*(X, \partial X) \cong K_{n-*}^G(X)$. At the same time,
equivariant $K$-theory and equivariant $K$-homology are related by a
universal coefficient theorem.
\begin{theorem}[Universal Coefficient Theorem \cite{MR948692}]
\label{thm:UCT}
Let $X$ be a locally compact $G$-space. Then there is a short exact
sequence of $R$-modules
\[
0\to \Ext^1_R(K^{*+1}_G(X),R) \to K_*^G(X) \to \Hom_R(K^*_G(X), R) \to
0,
\]
which doesn't necessarily split.
\end{theorem}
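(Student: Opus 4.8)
The plan is to obtain the sequence as the two-column collapse of a universal-coefficient spectral sequence, following the pattern of the classical $K$-theoretic UCT \cite{MR948692} but over the ground ring $R=R(G)$ in place of $\bZ$. First I would write down the map on the right: the index (Kasparov) pairing $K_*^G(X)\otimes_R K_G^*(X)\to K_*^G(\pt)=R$ is $R$-bilinear, so sending a class in $K_*^G(X)$ to the functional it defines gives a natural $R$-module map $\mu\co K_*^G(X)\to \Hom_R(K_G^*(X),R)$. The whole content of the theorem is the computation of $\ker\mu$ and $\coker\mu$.

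Next I would build an equivariant \emph{geometric resolution}, the analogue of the Rosenberg--Schochet construction. Working with the machinery of equivariant $KK$-theory (or from a $G$-CW filtration of $X$), one resolves $C_0(X)$ by objects whose equivariant $K$-cohomology is a projective $R$-module, realizing a projective resolution $\cdots\to P_1\to P_0\to K_G^*(X)\to 0$ of the $R$-module $K_G^*(X)$. On the projective pieces the UCT is trivially true ($\mu$ is an isomorphism and the $\Ext$ term vanishes), so applying equivariant $K$-homology to the resolution produces an exact couple, and hence a spectral sequence with $E_2$-term $\Ext_R^{\,p}\bigl(K_G^*(X),R\bigr)$ converging to $K_*^G(X)$.

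The decisive point---and the reason only $\Hom$ and $\Ext^1$ appear, even though $R$ is not hereditary and modules such as $R/I$ have \emph{infinite} projective dimension---is homological algebra over $R$. Since $R=\bZ[\bZ_2]$ is a symmetric (Frobenius) $\bZ$-algebra and $\bZ$ has self-injective dimension $1$, the ring $R$ itself has self-injective dimension $1$; equivalently $\Ext_R^{\,p}(M,R)=0$ for every $R$-module $M$ and every $p\ge 2$. One can corroborate this on the building blocks: feeding the periodic free resolution of $R/I$ from the proof of Proposition \ref{prop:extRIbyRJ} into $\Hom_R(\underline{\phantom{X}},R)$ gives $\Ext_R^{\,p}(R/I,R)=0$ for all $p\ge 1$, and likewise for $R/J$. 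Consequently the spectral sequence is concentrated in the columns $p=0,1$ and degenerates into the asserted short exact sequence
\[
0\to \Ext^1_R\bigl(K^{*+1}_G(X),R\bigr) \to K_*^G(X) \xrightarrow{\ \mu\ } \Hom_R\bigl(K^*_G(X), R\bigr) \to 0 .
\]
Non-splitting is then automatic, since $\Ext^1_R(K^{*+1}_G(X),R)$ is genuinely nonzero in examples (e.g.\ when a summand $R/2I$ is present, as in Theorem \ref{KGintM}), and there is no reason to expect a natural splitting.

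I expect the main obstacle to be the geometric step---producing the equivariant resolution, i.e.\ establishing the spectral sequence itself---because over $R$ the resolution is genuinely of infinite length, so one must invoke the full apparatus of equivariant $KK$-theory and check convergence rather than quote a one-step resolution as over the hereditary ring $\bZ$. By contrast the collapse is the easy, purely algebraic part, resting only on the one-dimensionality of the self-injective dimension of $R(\bZ_2)$.
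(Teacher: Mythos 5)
The paper does not prove this theorem: it is imported verbatim from \cite{MR948692}, and the only thing the paper itself adds is Example \ref{ex:S3contd}, which exhibits a concrete non-split instance. Measured against that, your sketch is a faithful reconstruction of the strategy of the cited reference, and its one substantive algebraic claim is correct: $R=\bZ[\bZ_2]$ is a Gorenstein ring of self-injective dimension $1$, so $\Ext^p_R(M,R)=0$ for every $R$-module $M$ and every $p\ge 2$, and this is exactly what forces the $\Ext$-into-$R$ spectral sequence to degenerate into a two-term exact sequence even though $R$ has infinite global dimension. You can corroborate it against the Appendix: dualizing the periodic resolution $\cdots\xrightarrow{t+1}R\xrightarrow{t-1}R\to R/I\to 0$ from the proof of Proposition \ref{prop:extRIbyRJ} kills $\Ext^p_R(R/I,R)$ for all $p\ge 1$, and the analogous computation for $R/2I$ gives $\Ext^1\cong R/(2,I)$ and nothing in higher degrees, consistent with Example \ref{ex:S3contd}.

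Two caveats. First, the step you defer --- constructing equivariant geometric resolutions, assembling the exact couple, and proving convergence of the resulting spectral sequence when the projective resolution is genuinely infinite --- is not a routine verification; it is essentially the entire technical content of \cite{MR948692}, so what you have is an honest outline of that proof rather than a self-contained argument. Second, your justification of ``doesn't necessarily split'' is too weak as stated: nonvanishing of $\Ext^1_R(K^{*+1}_G(X),R)$ does not by itself rule out a splitting of the particular extension that arises. The correct substantiation is the paper's Example \ref{ex:S3contd}: for $S^3$ with the antipodal action the sequence becomes $0\to I/2I\to R/2I\to R/I\to 0$, which is non-split by Proposition \ref{prop:RtwoI}.
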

\begin{example}
\label{ex:S3contd}
Let's continue with Example \ref{ex:S3} about $S^3$ with the antipodal
action. Since $\dim S^3$ is odd, Poincar\'e duality gives
$R/2I\cong K^0_G(S^3)\cong K_1^G(S^3)$ and $R/I\cong K^1_G(S^3)\cong
K_0^G(S^3)$. Let's see that this is consistent with Theorem
\ref{thm:UCT}. We have $\Hom_R(R/I, R)\cong R/I$ and $\Ext_R^1(R/I,
R)=0$ by Proposition \ref{prop:extRJbyR}. A homomorphism $R/2I\to R$
must kill the $\bZ$-torsion $I/2I$ in $R/2I$, hence must factor through
$R/I$, so also $\Hom_R(R/2I, R)\cong R/I$. Finally, $\Ext_R^1(R/2I,
R)\cong R/(2,I) $. One can see this from the exact
sequence
\begin{multline*}
0\to \Hom_R(R/2I, R)\cong R/I \cong J \to
R\cong \Hom_R(R,R) \\ \to \Hom_R(2I,R) \cong I \to 
\Ext_R^1(R/2I,R) \to \Ext_R^1(R,R)  =0.
\end{multline*}
So for example we have a short exact sequence
\begin{multline*}
0\to \Ext^1_R(K^0_G(S^3),R)\cong R/(2,I) \to K_1^G(S^3) \cong R/2I \\ \to
\Hom_R(K^1_G(S^3),R) \cong R/I\to 0,
\end{multline*}
and the sequence does \emph{not} split by Proposition
\ref{prop:RtwoI}.
This resolves a question left unanswered in \cite{MR948692}.
\end{example}

\bibliographystyle{hplain}
\bibliography{K3}
\end{document}